\documentclass{siamltex1213}

\pdfoutput=1

\usepackage[cmex10]{amsmath}
\usepackage{amssymb}
\usepackage{mathtools}
\usepackage{stmaryrd}
\usepackage{url}
\usepackage{tikz}

\newtheorem{rem}{Remark}
\newtheorem{notdef}[theorem]{Definition and Notation}

\newcommand{\jp}[1]{\llbracket #1 \rrbracket} 
\newcommand{\av}[1]{\{ #1 \}}
\newcommand{\corr}{\text{corr}}
\newcommand{\ucorr}{\ensuremath{u^{\corr}}}
\newcommand{\uinf}{\ensuremath{u^{\infty}}}
\newcommand{\scorr}{\ensuremath{\sigma^{\corr}}}
\newcommand{\sinf}{\ensuremath{\sigma^{\infty}}}
\newcommand{\Omegainf}{\ensuremath{\Omega^{\infty}}}
\def\clap#1{\hbox to 0pt{\hss#1\hss}}

\def\mathclap{\mathpalette\mathclapinternal}

\def\mathclapinternal#1#2{\clap{$\mathsurround=0pt#1{#2}$}}
\def\uscore{\protect\rule{.2em}{.2pt}}
\def\Uscore{\protect\rule{.35em}{.2pt}}
\newcommand{\segres}[2]{\textit{seg\uscore{#1}\Uscore{}res\uscore{#2}}}
\newcommand{\segresR}[3]{\textit{seg\uscore{#1}\Uscore{}res\uscore{#2}\Uscore{}r{#3}}}
\newcommand{\ci}[3]{\textit{{#1}CI\Uscore{#2}\Uscore{#3}}}
\newcommand{\lnMAGj}{\ensuremath{lnMAG_{\vec{j},loc}\,}}
\newcommand{\totDIFFj}{\ensuremath{totDIFF_{\vec{j},loc}\,}}

\hyphenation{lnMAG}

\begin{document}
\title{A Discontinuous Galerkin Method to Solve the EEG Forward Problem Using the Subtraction Approach}
\author{\text{Christian Engwer \footnotemark[1] \footnotemark[2] \footnotemark[3] \footnotemark[4]}
\and
\text{Johannes Vorwerk \footnotemark[2] \footnotemark[5] \footnotemark[7] \footnotemark[6]}
\and
\text{Jakob Ludewig \footnotemark[3]  \footnotemark[5] \footnotemark[6]}
\and
\text{Carsten H. Wolters \footnotemark[5] \footnotemark[6]}
}


\renewcommand{\thefootnote}{\fnsymbol{footnote}}
\footnotetext[1]{Corresponding author: \ christian.engwer@uni-muenster.de}
\footnotetext[2]{The first two authors contributed equally to this work.}
\footnotetext[3]{Institute for Computational and Applied Mathematics,
  University of M\"unster, Einsteinstra\ss e 62, 48149 M\"unster,
  Germany}
\footnotetext[4]{Cluster of Excellence EXC 1003, Cells in Motion, CiM, M\"unster, Germany}
\footnotetext[5]{Institute for Biomagnetism and
  Biosignalanalysis, University of M\"unster, Malmedyweg 15, 48149 M\"unster, Germany
}
\footnotetext[7]{Scientific Computing and Imaging (SCI) Institute, University of Utah, 72 S Central Campus Drive, Salt Lake City, UT-84112, USA}
\footnotetext[6]{This work was partially supported by the Deutsche Forschungsgemeinschaft
(DFG), project WO1425/7-1, the Priority
  Program 1665 of the Deutsche Forschungsgemeinschaft (DFG) (WO1425/5-1, 5-2), the
  Cluster of Excellence 1003 of the Deutsche Forschungsgemeinschaft (DFG EXC 1003 Cells in Motion),
  and by EU project ChildBrain (Marie Curie Innovative
  Training Networks, grant agreement no. 641652).}
\maketitle

\begin{abstract}
  In order to perform electroencephalography (EEG) source
  reconstruction, i.e., to localize the sources underlying a measured
  EEG,
  the electric potential distribution at the electrodes generated
  by a dipolar current source in the brain has to be
  simulated, which is
  the so-called EEG forward problem.
  To solve it accurately, it is necessary to
  apply numerical methods that are able to take the individual
  geometry and conductivity distribution of the subject's head into account.
  In this context, the finite element method (FEM) has shown high numerical accuracy with the
  possibility to model complex geometries and conductive
  features, e.g., white matter conductivity anisotropy.
  In this article, we introduce and analyze the application of a
  discontinuous Galerkin (DG) method, a finite element method that
  includes features of the finite volume framework, to the EEG forward
  problem. The DG-FEM approach fulfills the conservation
  property of electric charge also in the discrete case, making it
  attractive for a variety of applications.
  Furthermore, as we show, this approach
  can alleviate modeling inaccuracies that might occur in
  head geometries when using classical FE methods, e.g., so-called ``skull leakage effects'', which may occur in areas where the thickness of the skull is in the range of the mesh resolution.
  Therefore, we derive a DG formulation of the FEM
  subtraction approach for the EEG forward problem and present numerical results that highlight the
  advantageous features and the potential benefits of the proposed approach.
\end{abstract}

\begin{keywords}
Discontinuous Galerkin,
finite element method,
conservation properties,
EEG,
dipole,
subtraction method,
realistic head modeling
\end{keywords}

\begin{AMS}
35J25, 
35J75, 
35Q90, 
65N12, 
65N30, 
68U20, 
92C50  
\end{AMS}

\section{Introduction}
EEG source reconstruction is nowadays widely used in both research
and clinical routine to investigate the activity of the human brain,
as it is a non-invasive, easy to perform, and relatively cheap technique \cite{MHama1,CHW:Bre2012}.
To reconstruct the active brain areas from the electric potentials measured
at the head surface, it is necessary
to simulate the electric potential generated by a dipolar current source in the gray matter compartment of the brain, the so-called
EEG forward problem. The achievable accuracy in solving the forward problem
strongly depends on a realistic modeling of shape and conductive features of
the volume conductor, i.e., the human head. Therefore, it is necessary to
apply numerical methods to solve the underlying partial differential
equations in realistic geometries, since analytical solutions exist for only few special cases, e.g., nested shells  \cite{CHW:Mun93}. Different numerical methods
have been proposed to solve this problem, e.g., boundary element methods (BEM) \cite{CHW:Mos99,CHW:Aka2010,AGram2011,CHW:Sten2012},
finite volume methods (FVM) \cite{MCook2006}, finite difference methods (FDM) \cite{CHW:Wen2008,CHW:Vat2009,CHW:Mon2014}, or
finite element methods (FEM) \cite{CHW:Ber91,CHW:Mar98,CHW:Sch2002,FDrec1,CHW:Ram2006,CHW:Pur2011}.
Finite element methods were shown to achieve high numerical accuracies \cite{FDrec1,JVorw2012}
and offer the important possibility to model complex
geometries and also anisotropic conductivities, with only a weak influence on the computational effort
\mbox{\cite{JVorw2014}}. The computational burden of using FE methods to solve the EEG forward problem could be clearly reduced by the introduction of transfer
 matrices and fast solver methods \mbox{\cite{CHW:Wei2000,CHW:Gen2004,CWolt2004}}.

One of the main tasks in applying FE methods to solve the EEG forward problem is to
deal with the strong singularity introduced by the source model of a
current dipole. Therefore, different approaches to solve the EEG
forward problem using the FEM have been proposed, e.g., the Saint-Venant \cite{CHW:Tou65,CHW:Sch94,HBuch1997,JVorw2012},
the partial integration \cite{YYan1991,CHW:Wei2000,SVall2010,JVorw2012}, the Whitney or Raviart-Thomas \cite{CHW:Tan2005,CHW:Pur2011}, or the subtraction approach \cite{CHW:Ber91,CHW:Mar98,CHW:Sch2002,CWolt1,FDrec1,JVorw2012}.
All these approaches rely on a continuous Galerkin FEM (CG-FEM) formulation, also called Lagrange or conforming FEM, i.e., the resulting solution for the electric potential is
continuous.

The use of tetrahedral \cite{CHW:Mar98,FDrec1,JVorw2014} as well as that of hexahedral \cite{CHW:Sch2002,MRull2009,UAydi2014,CHW:Ayd2015} meshes
has been proposed for solving the EEG forward problem with the FEM. Tetrahedral meshes can be generated by
constrained Delaunay tetrahedralizations (CDT) from given tissue surface
representations \cite{FDrec1,JVorw2014}. This approach has the advantage that smooth tissue surfaces are well represented in the model.
On the other side, the generation of such models is difficult in practice and might cause unrealistic model features, e.g.,
holes in tissue compartments such as the foramen magnum and the optic canals in the skull are often artificially
closed to allow CDT meshing. Furthermore, CDT modeling necessitates the generation of nested, non-intersecting, and -touching surfaces. However, in reality, surfaces might touch, for example, the inner skull and outer brain surface. Hexahedral models do
not suffer from such limitations, can be easily generated from voxel-based magnetic resonance imaging (MRI) data,
and are more and more frequently used in source analysis applications \cite{MRull2009,UAydi2014,CHW:Ayd2015}.
This paper therefore focuses on the application of FE methods with hexahedral meshes. However, the application of the CG-FEM with hexahedral meshes has the disadvantage
that the representation of thin tissue structures in combination with insufficient mesh resolutions might result in
geometry approximation errors. It has been shown, e.g., in \cite{HSonn2013}, that the combination
of thin skull structures and insufficient hexahedral mesh resolutions might result in so-called skull leakages
in areas where scalp and CSF elements are erroneously connected via single skull vertices or edges,
as illustrated in Figure \ref{fig:shortcut}. Such leakages can lead to significantly inaccurate results when using
vertex-based methods like, e.g., the CG-FEM, and might be one of the main reasons why in a recent
head modeling comparison study for EEG source analysis in presurgical epilepsy diagnosis, the use of the CG-FEM with a four-layer hexahedral head model with a resolution of 2 mm did not lead to better results than those for simpler head models, i.e., a three-layer local sphere and a three-layer boundary element head model \cite{Birot2014}.

In this paper, we derive the mathematical equations underlying
the forward problem of EEG and introduce its solution using the
subtraction approach. After a short explanation of the strengths and
weaknesses of this approach, we propose and evaluate a
new formulation of the subtraction approach on the basis of the discontinuous
Galerkin FEM (DG-FEM). We then show that, although the CG- and DG-FEM
achieve similar numerical accuracies in multi-layer sphere validation studies
with high mesh resolutions, the DG-FEM mitigates the problem of skull leakages
in case of lower mesh resolutions. The results of the sphere studies are complemented and underlined by the results obtained in a realistic six-compartment head model.

\section{Theory}

\subsection{The forward problem}

The partial differential equation underlying the EEG
forward problem can be derived by introducing the quasi-static
approximation of Maxwell's equations \cite{MHama1,CHW:Bre2012}. When relating
the electric field to a scalar potential, $E = - \nabla u$, and
splitting up the current density $J$ into a term $f$, which describes
the current source,
and a return current, or flux,
$- \sigma \nabla u$ with  $\sigma (x)$ being the conductivity distribution in the head domain, we obtain a \emph{Poisson equation}
\begin{subequations}
  \label{eq:forward-strong}
\begin{align}
  - \nabla \cdot ( \sigma \nabla u ) &= f \quad &\mbox{in}\; \Omega,\label{eq:forward-strong1}\\
    \sigma \partial_{\mathbf{n}}  u &= 0\quad &\mbox{on}\; \partial \Omega ,\label{eq:forward-strong2}
\end{align}
\end{subequations}
where $\Omega$ denotes the head domain, which is assumed to be open
and connected, and $\partial \Omega$ its boundary. We have homogeneous Neumann
boundary conditions here, since we assume a conductivity $\sigma (x) =
0$ for all $x \notin \bar{\Omega}$.

\subsection{The subtraction approach}
\label{sec:subtraction}
We briefly derive the classical subtraction FE
approach as presented in \cite{CWolt1, FDrec1}.
We assume the commonly used point-like dipole source at position
$y$ with moment $p$, $f_y ( x ) = \nabla \cdot \left(p  \delta_{y} ( x )\right)$.
This choice complicates the further mathematical treatment, as the right-hand side is not
square-integrable in this case.
However, when assuming that there exists a non-empty open
neighborhood $\Omegainf$ of the source position $y$ with
constant isotropic conductivity $\sinf$, we can split the potential $u$ and the conductivity $\sigma$ into two parts:
\begin{subequations}
\begin{align}
 u &= \uinf + \ucorr, \\
 \sigma &= \sinf + \scorr \label{eq:sigma}.
\end{align}
\end{subequations}
$\uinf$ is the potential in an unbounded, homogeneous conductor
and can be calculated analytically:
$\uinf (x)= \frac{1}{4 \pi\sinf}\frac{\langle p,x-y\rangle}{ | x-y |^3}$.
The more general case of anisotropic conductivities can be treated, too \cite{CWolt1, FDrec1}, but is not especially derived
here.

Inserting the decomposition of $u$ into \eqref{eq:forward-strong}
and subtracting the homogeneous solution, again
results in a Poisson equation for the searched correction potential $\ucorr$:
\begin{subequations}
  \label{eq:subtraction-strong}
\begin{align}
    \label{eq:subtraction-strong1}
    - \nabla \cdot ( \sigma \nabla \ucorr) &= \nabla \cdot ( \scorr \nabla \uinf ) \quad &\text{in }\Omega,\\
    \label{eq:subtraction-strong2}
    \sigma \partial_{\mathbf{n}} \ucorr & = - \sigma \partial_{\mathbf{n}} \uinf \quad &\text{on }\partial \Omega.
\end{align}
\end{subequations}
To solve this problem numerically, \cite{FDrec1} propose a conforming
first-order finite element method: Find $\ucorr \in V_h \subset
H^1$ such that it fulfills the weak formulation
\begin{equation}
  \label{eq:subtraction-weak}
  \int\limits_\Omega \sigma \nabla \ucorr \cdot \nabla v dx =
  - \int\limits_\Omega \scorr \nabla \uinf \cdot \nabla v dx
  - \int\limits_{\partial \Omega} \sinf \partial_{\mathbf{n}} \uinf  v ds.
\end{equation}
The weak form can be heuristically derived by
multiplication with a test function $v \in V_h$ and subsequent partial
integration. Reorganization of some terms and applying the identity
\eqref{eq:sigma} yields the proposed form in equation  \eqref{eq:subtraction-weak}.
The subtraction approach is theoretically well understood. The existence
of a solution as well as the uniqueness and convergence of this solution are
examined in \cite{CWolt1, FDrec1}.

\begin{figure}%
\centering
\raisebox{-0.5\height}{\includegraphics[width=0.4\textwidth]{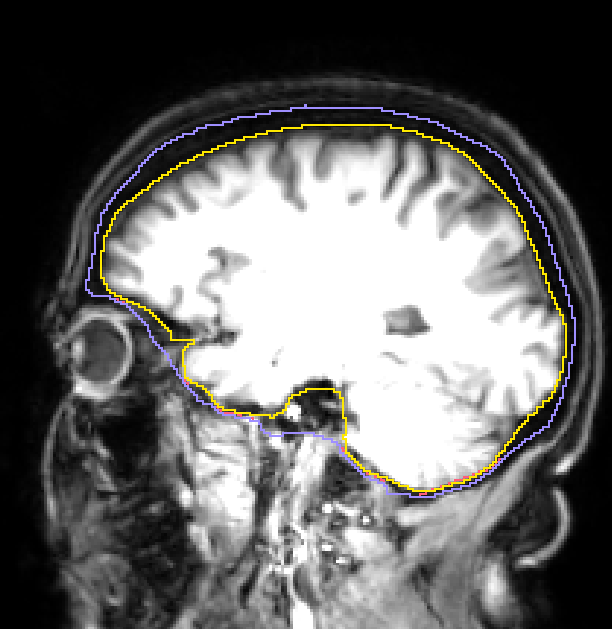}}\hfill
\raisebox{-0.5\height}{\includegraphics[width=0.55\textwidth]{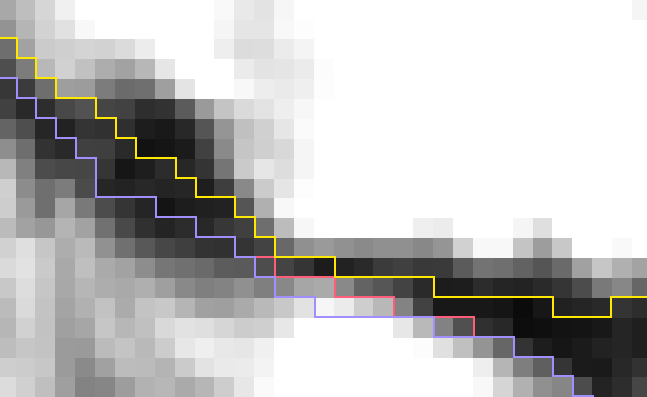}}%
\caption{Sketch of segmentation that might lead to leakage effects (left). The yellow line shows the inner skull surface,
the red line the original outer skull surface, the blue line the corrected outer skull surface. Where the red and the blue line overlap, only the blue line is visible.
In the magnified detail, the scalp and CSF show two erroneous connections via single vertices or edges
(right subfigure, where the red and the yellow line touch each other). Such a segmentation can lead to
  significantly inaccurate results when using vertex-based methods such as, for example,
  the CG-FEM.}%
\label{fig:shortcut}%
\end{figure}

\subsection{Skull leakage effects}
\label{sec:skullleakage}
 As discussed in the introduction, hexahedral meshes are frequently used in practical applications of FEM-based EEG/MEG source analysis,
  due to the clearly simplified creation process in comparison to CDT meshes.
  A pitfall that has to be taken into account in this scenario is leakage effects, especially in the thin skull compartment.
  If the segmentation resolution, i.e., the resolution of the discrete approximation of the geometry, is coarse compared to the thickness of the skull,
  segmentation artifacts as illustrated in Figure \ref{fig:shortcut} (yellow and red lines)
  occur.
  When directly generating a hexahedral mesh from this segmentation, elements belonging to the
  highly conductive compartments interior to the skull, i.e., most often the CSF, and to
  the skin compartment are now connected via a shared vertex or edge, although they are physically separated in reality.
  When using such a mesh for, e.g., the CG-FEM with Lagrange Ansatz-functions,
   these artifacts lead to skull leakage, as sketched in
  Figure \ref{fig:leakage}.
  As a consequence of the vertex-based Ansatz-functions, the shared vertices have inadequately
  high entries in the stiffness matrix, which result in current leakage ``through'' these vertices.

This effect remains unchanged even when (globally or locally) refining the resolution of the mesh. An increase of the image -- and thereby also the segmentation -- resolution might eliminate this effect, but is usually not possible.
  Instead, this problem might be circumvented by artificially increasing the thickness of the skull segmentation
  in these areas (blue line in Figure \ref{fig:shortcut}). However, this workaround might, again, lead to inaccuracies in the EEG forward computation due to the now too thick representation of the skull compartment.

\begin{figure}%
\centering
\includegraphics[width=0.3\textwidth]{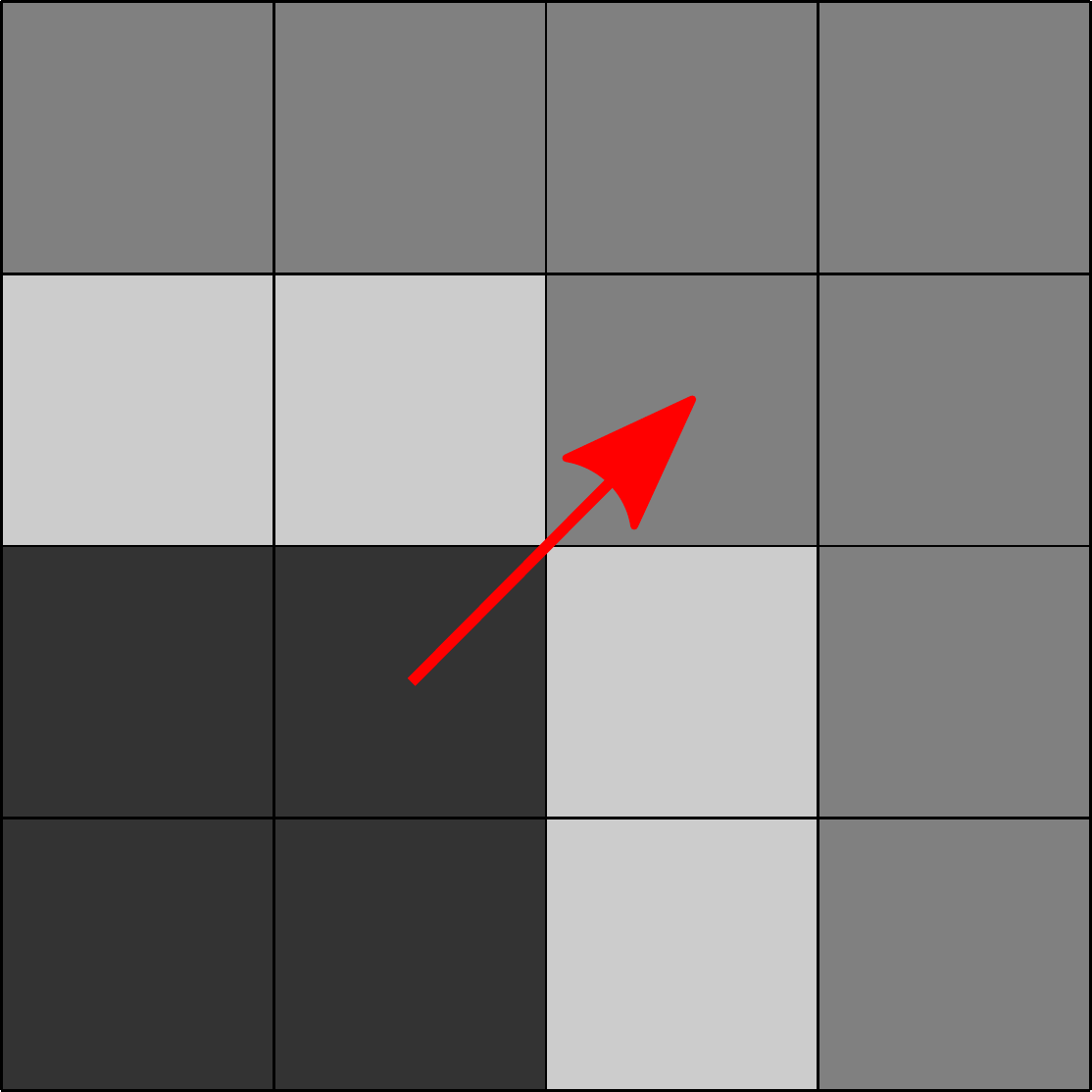}\qquad\qquad
\includegraphics[width=0.3\textwidth]{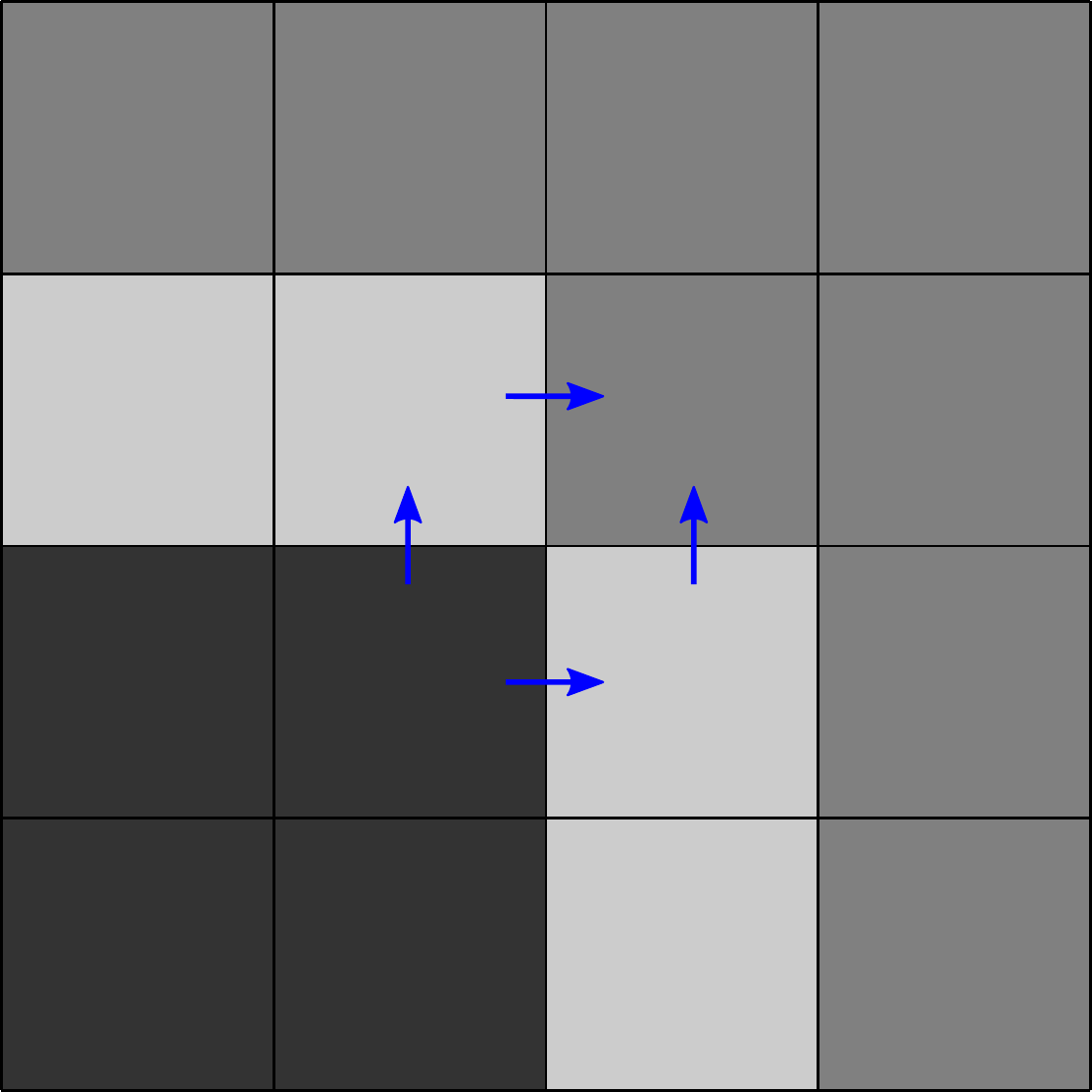}
\\
\parbox{0.3\textwidth}{\centering CG-FEM}\qquad\qquad
\parbox{0.3\textwidth}{\centering DG-FEM}
\caption{CG-FEM simulations lead to an overestimated
  electric current at degenerated vertices of the skull. This effect is due
  to the vertex-based discretization, which considers only the potential,
  but not the electric current. The DG-FEM is based on a current
  reconstruction through cell faces. Therefore, these methods do not
  overestimate the electric current, even in the presence of segmentation artifacts.}%
\label{fig:leakage}%
\end{figure}

  In the following section, we derive a discontinuous Galerkin (DG) formulation for the subtraction FE approach. This formulation
  has the advantage that it is locally charge preserving and
  controls the current flow through element faces,
thereby preventing possible leakage effects; see illustration in Figure
  \ref{fig:leakage}.

\subsection{A discontinuous Galerkin formulation}%
Preserving fundamental physical properties is very important
in order to obtain reliable simulation results.
As discussed in the previous section, a correct approximation of the
electric current is crucial for reliable simulation results. Continuity of the
normal component of the current directly implies conservation of
charge.

The discontinuous Galerkin method allows to construct formulations
that preserve such conservation properties
also in the discretized space.
We first discuss
which quantities to preserve when using the subtraction approach for the continuous problem
and then introduce a discontinuous Galerkin formulation.

\subsubsection{Conservation properties}
\label{sec:conservation}%
A fundamental physical property is the conservation of
charge:
\begin{equation}
  \label{eq:continuity1}
  \int\limits_{\partial K} \sigma \nabla u \cdot \vec{n}\, ds = \int\limits_K f_y dx,
\end{equation}
for any control volume $K\subseteq \Omega$.
Following the subtraction
approach, we split the current $\sigma \nabla u = (\sinf + \scorr)
\nabla (\uinf + \ucorr)$. Rearrangement then yields
\begin{equation*}
    \int\limits_{\partial K} \sigma \nabla \ucorr \cdot \vec{n}\, ds =
    - \int\limits_{\partial K} \scorr \nabla \uinf \cdot \vec{n}\, dx
    \underbrace{- \int\limits_{\partial{K}} \sinf \nabla \uinf \cdot
     \vec{n}\,ds
      + \int\limits_K f_y dx}_{\equiv0}\,.
\end{equation*}
Applying Gauss's theorem to the right-hand side, we obtain a conservation
property for the correction potential
\begin{equation}
  \label{eq:continuity2}
    \int\limits_{\partial K} \underbrace{\sigma \nabla \ucorr}_{\vec{j}^\corr} \cdot \vec{n}\, ds =
    \int\limits_K \underbrace{- \nabla \cdot \scorr \nabla \uinf}_{f^\corr} dx\,,
\end{equation}
which basically states that the correction potential $\ucorr$ causes a flux $\vec{j}^\corr$; the charge corresponding to this flux is a conserved property with
source term $f^\corr = \nabla \cdot \scorr \nabla \uinf$.

For FE methods this property carries over to the discrete
solution, if the test space contains the characteristic function,
which is one on $K$ and zero everywhere else. In general, a conforming
discretization does not guarantee this property.

Conservation of charge also holds for $\uinf$ in the case of a homogeneous volume conductor (with conductivity $\sinf$ in our case).
Thus, the normal components of both the electrical flux $\sigma \nabla u$ and $\sinf
\nabla \uinf$ are continuous. Rewriting $\vec{j}$ in terms of
$\scorr$, $\sinf$, $\ucorr$ and $\uinf$ we can show that the normal component of
$\sigma \nabla \ucorr + \scorr \nabla \uinf$ is also continuous.

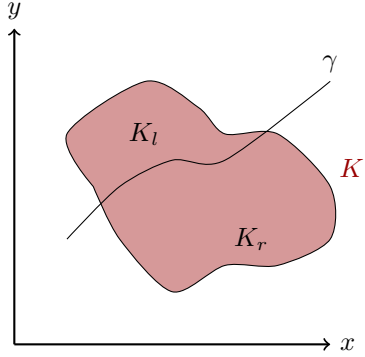
\begin{figure}
  \centering
  \begin{center}
    \begin{tikzpicture}[scale = 0.7]
      \draw [->, thick] (0,0) coordinate (origin) -- (0,6) node
      [above] {$y$}; \draw [->, thick] (origin) -- (6,0) node
      [right] {$x$}; \draw [fill=red!60!black!40]
      plot[smooth]coordinates{(1.5,3)(2,2)(3,1)(4,1.5)(5,1.5)(6,2)(6,3)(5,4)(4,4)(3.5,4.5)(2.5,5)(1,4)(1.5,3)};
      \draw
      plot[smooth]coordinates{(1,2)(2,3)(3,3.5)(4,3.5)(6,5)}node
      [above]{$\gamma$}; \draw (4,2) node[right]{$K_r$};
      \draw (2,4) node[right]{$K_l$}; \draw(6,3) node[above
      right,red!60!black!100]{$K$};
    \end{tikzpicture}
  \end{center}

  \caption{Interface $\gamma$ splits $K$ into two parts.}
  \label{fig:jump}
\end{figure}

\begin{definition}
  \label{def:jump}
  We consider an arbitrary interface $\gamma$, which separates the
  control volume $K$ into two patches $K_l$ and $K_r$ (see Figure
  \ref{fig:jump}).
  Following \mbox{\cite{Arnold2002}}, we introduce the  jump of a
  scalar function $u$ or a vector-valued function $\vec{v}$ along $\gamma$
  as
    \begin{subequations}
  \begin{align}
      \jp{u}_\gamma &\coloneqq
    u \vert_{\partial K_l} \,\vec{n}_{K_l}\; +\;
    u \vert_{\partial K_r} \,\vec{n}_{K_r}\,,\\
    \jp{\vec{v}}_\gamma &\coloneqq
    \vec{v} \vert_{\partial K_l} \cdot \vec{n}_{K_l}\; +\;
    \vec{v} \vert_{\partial K_r} \cdot \vec{n}_{K_r}\,.
  \end{align}
    \end{subequations}
Note that this is consistent with the following definition
  \begin{equation*}
    \jp{u}_\gamma (x) = \left(\lim_{x' \rightarrow x \text{ in } K_l} u(x') -  \lim_{x' \rightarrow x \text{ in } K_r} u(x')\right) \vec{n}_{\gamma}\,.
  \end{equation*}
and correspondingly for the vector-valued function $\vec{v}$.\\
Note further that the jump of a scalar function is vector-valued,
  while the jump of a vector-valued function is scalar.
\end{definition}

\begin{lemma}
  \label{lemma:conservationprop}
  Given a potential $u$ with a flux $\sigma \nabla u$ with continuous normal component along any surface, also the normal component of
  $\sigma \nabla \ucorr + \scorr \nabla \uinf$ is continuous for the subtraction approach.
\end{lemma}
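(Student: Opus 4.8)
The plan is to reduce the assertion to the two continuity facts already recorded in the text just above the lemma: the normal component of $\sigma\nabla u$ is continuous by hypothesis, and the normal component of $\sinf\nabla\uinf$ is continuous because $\uinf$ solves the homogeneous problem (conservation of charge for $\uinf$), being in fact smooth on $\mathbb{R}^3\setminus\{y\}$. The whole proof is then an algebraic rearrangement plus linearity of the jump operator.

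First I would expand the total flux using both splittings $u=\uinf+\ucorr$ and $\sigma=\sinf+\scorr$:
\begin{equation*}
\sigma\nabla u=(\sinf+\scorr)\nabla(\uinf+\ucorr)=\sinf\nabla\uinf+(\sinf+\scorr)\nabla\ucorr+\scorr\nabla\uinf=\sinf\nabla\uinf+\sigma\nabla\ucorr+\scorr\nabla\uinf.
\end{equation*}
This is a pointwise identity on each of the two patches $K_l$, $K_r$ adjacent to an arbitrary interface $\gamma$, so rearranging gives $\sigma\nabla\ucorr+\scorr\nabla\uinf=\sigma\nabla u-\sinf\nabla\uinf$ on $K_l$ and on $K_r$ separately.

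Next I would apply the jump $\jp{\cdot}_\gamma$ of Definition \ref{def:jump}. Since it depends linearly on the one-sided traces,
\begin{equation*}
\jp{\sigma\nabla\ucorr+\scorr\nabla\uinf}_\gamma=\jp{\sigma\nabla u}_\gamma-\jp{\sinf\nabla\uinf}_\gamma.
\end{equation*}
The first term on the right vanishes by assumption. For the second, $\uinf(x)=\frac{1}{4\pi\sinf}\frac{\langle p,x-y\rangle}{|x-y|^{3}}$ is smooth on $\mathbb{R}^3\setminus\{y\}$ and $\nabla\cdot(\sinf\nabla\uinf)=0$ there (conservation of charge for $\uinf$), so $\sinf\nabla\uinf$ is a continuous, divergence-free field away from $y$; hence $\jp{\sinf\nabla\uinf}_\gamma=0$ on every interface $\gamma$ not meeting $y$. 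Therefore $\jp{\sigma\nabla\ucorr+\scorr\nabla\uinf}_\gamma=0$, which is exactly the claimed continuity of the normal component.

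I do not expect any real obstacle here — the content is essentially "continuous minus continuous is continuous." The only point deserving a remark is the exceptional interface through the source $y$: there $\sinf\nabla\uinf$ is singular. This is harmless for the intended application, however, since $y$ lies in the interior of the homogeneous neighborhood $\Omegainf$, where $\scorr\equiv 0$, and the interfaces that matter for the DG formulation (and for conservation of charge) are the conductivity interfaces, which by construction do not pass through $\Omegainf$.
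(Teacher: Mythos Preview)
Your proof is correct and follows essentially the same route as the paper: both arguments use linearity of the jump together with $\jp{\sigma\nabla u}_\gamma=0$ and $\jp{\sinf\nabla\uinf}_\gamma=0$ to conclude $\jp{\sigma\nabla\ucorr+\scorr\nabla\uinf}_\gamma=0$. The only cosmetic difference is that you write the target quantity directly as $\sigma\nabla u-\sinf\nabla\uinf$ before taking the jump, whereas the paper first expands $\jp{\sigma\nabla u}_\gamma$ into four pieces and then recombines; your presentation is arguably a touch cleaner, and your remark on the exceptional interface through $y$ is a worthwhile addition not made explicit in the paper.
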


\begin{proof}
  We consider an arbitrary interface $\gamma$. At each point $x$ along
  $\gamma$ the normal components of the fluxes,
  $\sigma \nabla u \cdot \vec{n}_\gamma$ and $\sinf \nabla \uinf \cdot \vec{n}_\gamma$, are continuous. Thus, the
  jump vanishes for them and we obtain
  \begin{equation}
    \jp{\sigma \nabla u}_\gamma = 0 = \jp{\sinf \nabla \uinf}_\gamma.
  \end{equation}
  Rewriting $\jp{\sigma \nabla u}_\gamma$ in terms of
  $\scorr, \sinf, \ucorr$, and $\uinf$, we obtain
  \begin{eqnarray}
    \nonumber
     &
    \jp{\sinf \nabla \ucorr}_\gamma +
    \jp{\scorr \nabla \ucorr}_\gamma +
    \jp{\scorr \nabla \uinf}_\gamma = 0 \\
    \nonumber
    \Leftrightarrow &
    \jp{\sigma \nabla \ucorr}_\gamma =
    - \jp{\scorr \nabla \uinf}_\gamma\\
    \Leftrightarrow &
    \label{eq:flux-jump-relation}
    \jp{\sigma \nabla \ucorr
    + \scorr \nabla \uinf}_\gamma = 0.
  \end{eqnarray}
  As this property holds for any control volume, the normal component of the combined flux $\sigma \nabla \ucorr
    + \scorr \nabla \uinf$ is also continuous for any interface $\gamma$.
\end{proof}

Note that this also implies the identity
\begin{equation}
  \jp{\sigma \nabla \ucorr}_\gamma =
  - \jp{\scorr \nabla \uinf}_\gamma\,,\label{eq:jp_eq}
\end{equation}
for any interface $\gamma$, which is later needed to derive the weak form \eqref{eq:weakform}
from equations \eqref{eq:partint_lhs3} and \eqref{eq:dg_rhs}.

\subsubsection{A weak formulation}
An alternative to the conforming discretization sketched in Section \ref{sec:subtraction}
is to use more general trial and test spaces.
We suggest employing a symmetric discontinuous Galerkin
discretization. The standard derivation of the DG formulation does not apply
immediately, as the intrinsic conservation property for $\ucorr$ differs
from the conservation property of the classical Poisson problem. In
the following section, we will briefly describe the most important steps in
the construction of a \emph{Symmetric Interior Penalty Galerkin} (SIPG) DG formulation for the subtraction approach.
For further details on DG methods, we refer to \cite{Arnold2002} or the book
of Pietro and Ern \cite{AErn1}. We start with the usual definitions:
\begin{definition}[Triangulation $\mathcal{T}_h(\Omega)$]
  Let $\mathcal{T}_h(\Omega)$ be a finite
  collection of disjoint and open subsets forming a partition of
  $\Omega$. The subscript $h$ corresponds to the mesh-width \mbox{$h \coloneqq \max
  \left\{ diam ( E ) \mid E \in \mathcal{T}_h \right\}$}.
Furthermore, the
triangulation induces the \emph{internal
  skeleton}
\begin{equation}
  \label{eq:gamma_int}
  \Gamma_\text{int} \coloneqq \left\{ \gamma_{e,f} = \partial E_e
    \cap \partial
    E_f \; \right\vert
  \left. E_e, E_f \in \mathcal{T}_h \, ,\, E_e \neq E_f,\, \left|
      \gamma_{e,f} \right| > 0 \right\}
\end{equation}
and the \emph{skeleton} $\Gamma \coloneqq \Gamma_\text{int} \cup \partial\Omega$.
\end{definition}

\begin{definition}[Broken polynomial spaces]  Broken polynomial spaces are defined as piecewise polynomial spaces on the partition
  $\mathcal{T}_h(\Omega)$ as
\begin{equation}
    \label{eq:brokenspaces}
    V^k_h \coloneqq \left\{\, v \in L^2(\Omega)
      \,:\, v|_{E} \in P^k(E) \,\right\},
\end{equation}
  where $P^k$ denotes the space of polynomial functions of degree $k$.
  They describe functions that exhibit elementwise polynomial behavior
  but may be discontinuous across element interfaces.
\end{definition}

Since the elements of $V^k_h$ may admit discontinuities across element boundaries, the gradient of a function $v \in V^k_h$ is not defined everywhere on $\Omega$. To account for this, we introduce the broken gradient operator.

\begin{definition}[Broken gradient operator] The broken gradient $\nabla_h : V^k_h \rightarrow [L^k(\Omega)]^d$ is defined such that, for all $v \in V^k_h$
\begin{equation}
	\left(\nabla_h v\right)|_E = \nabla(v|_E) \qquad \text{ for all } E \in \mathcal{T}_h(\Omega).
\end{equation}
\end{definition}
\begin{notdef}[Jump and Average]
Using the Definition \ref{def:jump} we introduce the abbreviated notation
of the jump
\begin{equation*}
\jp{x}_{e,f} \coloneqq \jp{x}_{\gamma_{e,f}}
\end{equation*}
of a piecewise continuous function $x$ on the interface $\gamma_{e,f}$
between two adjacent elements $E_e, E_f \in \mathcal{T}_h$.
We further define the \emph{average operator}
\begin{align}
\av{x}_{e,f} \coloneqq \omega_{e,f} x\vert_{\partial E_e} + \omega_{f,e} x\vert_{\partial E_f}\,.
\end{align}

The weights
$\omega_{e,f}$ and $\omega_{f,e}$ can be chosen to be the arithmetic
mean, but for the case of heterogeneous conductivities, \mbox{\cite{dipietro2008}} has
shown  that a conductivity-dependent choice is optimal:
\begin{align}
\omega_{e,f} \coloneqq \frac{\sigma_f}{\sigma_f+\sigma_e} \text{\quad{}and\quad} \omega_{f,e} \coloneqq \frac{\sigma_e}{\sigma_e+\sigma_f}\,.
\end{align}
We further introduce the average operator with switched weights
\begin{align}
\av{x}^{\ast}_{e,f} := \omega_{f,e} x\vert_{\partial E_e} + \omega_{e,f} x\vert_{\partial E_f},
\end{align}
 and obtain the following multiplicative property:
\begin{align}
  \label{eq:jumpprop}
  \jp{x y}_{e,f} =
  \jp{x}_{e,f} \av{y}^{\ast}_{e,f} + \av{x}_{e,f} \jp{y}_{e,f}\,.
\end{align}
\end{notdef}

Using a Galerkin approach, we seek for a solution $\ucorr_h \in V^k_h$,
which fulfills \eqref{eq:subtraction-strong} in a weak sense. We
start the derivation by testing with a test function $v_h \in V^k_h$:
\begin{equation}
  \label{eq:dg-partint}
  - \int\limits_{\Omega} \nabla \cdot \sigma \nabla \ucorr_h  v_h \, dx =
  \int\limits_{\Omega} \nabla \cdot \scorr \nabla \uinf  v_h \, dx
\end{equation}

On each $E \in \mathcal{T}_h(\Omega)$, we apply integration by
parts. Element boundaries are split into the domain
boundary and all internal edges. The electrical current $\sigma \nabla \ucorr_h \cdot \vec{n}$ through the
boundary is given by the inhomogeneous Neumann boundary
conditions \eqref{eq:subtraction-strong2}.
For the left-hand side, we obtain
\begin{equation}
  \label{eq:partint_lhs}
\begin{split}
  \text{lhs}
  =&- \int\limits_{\Omega} \nabla \cdot \sigma \nabla \ucorr_h  v_h \, dx\\
  =&  \int\limits_{\Omega} \sigma
    \nabla_h \ucorr_h \cdot \nabla_h v_h \, dx
    + \int\limits_{\partial \Omega}
      \sigma \nabla \uinf \cdot \vec{n}\, v_h \, ds
    - \int\limits_{\Gamma_\text{int}}
    \jp{\sigma \nabla_h \ucorr_h v_h} \, ds,
\end{split}
\end{equation}

and with the multiplicative property \eqref{eq:jumpprop} follows
\begin{equation}
\begin{split}
  \label{eq:partint_lhs3}
  \text{lhs} =
  \int\limits_{\Omega} \sigma
  \nabla_h \ucorr_h \cdot \nabla_h v_h \, dx
  &+ \int\limits_{\partial \Omega}
  \underbrace{\sigma \nabla \uinf \cdot \vec{n}\, v_h}_{\text{term}~\dagger} \, ds \\
  &- \int\limits_{\Gamma_\text{int}}
  \underbrace{\jp{\sigma \nabla_h \ucorr_h}}_{\text{term}~\ddag} \av{v_h}^{\ast}  +
  \av{\sigma \nabla_h \ucorr_h} \jp{v_h}\, ds\, .
\end{split}
\end{equation}
Applying the same relations for the right-hand side, we obtain
\begin{equation}
\begin{split}
  \text{rhs} =
  - \int\limits_{\Omega} \scorr \nabla \uinf \cdot
  \nabla_h v_h \,dx
  &+ \int\limits_{\partial \Omega}
  \underbrace{\scorr \nabla \uinf \cdot \vec{n}\, v_h}_{\text{term}~\dagger} \,ds \\
  &+ \int\limits_{\Gamma_\text{int}}
  \underbrace{\jp{\scorr \nabla \uinf}}_{\text{term}~\ddag}\av{v_h}^{\ast}
  + \av{\scorr\nabla \uinf} \jp{v_h}  \,ds \, . \label{eq:dg_rhs}
\end{split}
\end{equation}

Summing up the boundary integrals \eqref{eq:partint_lhs3}$^\dagger$ and
\eqref{eq:dg_rhs}$^\dagger$ yields a remaining term $-\sinf \nabla
\uinf \cdot \vec{n}\, v_h$ on the right-hand side.
As discussed in Section \ref{sec:conservation}, the conservation
properties also imply that the normal component of $\sigma \nabla \ucorr + \scorr \nabla \uinf$ is
continuous, see \eqref{eq:flux-jump-relation}. For the discrete
solution, we require the same conservation property; thus the jump term
\eqref{eq:partint_lhs3}$^\ddag$ equals to
$-\jp{\scorr \nabla \uinf}$ and cancels out with term
\eqref{eq:dg_rhs}$^\ddag$.

To gain adjoint consistency, we symmetrize the operator and add the additional term
\begin{equation} \label{consistencyterm}
  \tilde{a}^{_\text{sym}}(\ucorr_h, v_h) := - \int\limits_{\Gamma_\text{int}} \av{\sigma \nabla_h v_h} \jp{\ucorr_h}  \,ds\,.
\end{equation}
To guarantee coercivity,
 the left-hand side is
supplemented with
the
penalty term
\begin{align}
  J(\ucorr_h, v_h) =
  \eta
  \int\limits_{\Gamma_\text{int}}
  \frac{\hat\sigma_\gamma}{h_\gamma}
  \jp{ \ucorr_h } \jp{ v_h } \,ds\,,
  \label{penaltyterm}
\end{align}
where $h_\gamma$ and $\hat\sigma_\gamma$ denote local definitions of
the mesh width and the electric conductivity on an edge
$\gamma$, respectively.
In our particular case, we choose $h_\gamma$ according to \cite{houston2012anisotropic} and
$\hat\sigma_\gamma$ as the harmonic average of the
conductivities of the adjacent elements \mbox{\cite{dipietro2008}}:
\begin{align*}
 h_{\gamma_{e,f}}=\frac{\min(|E_e|,|E_f|)}{|\gamma_{e,f}|} \text{\quad{}and\quad} \hat\sigma_{\gamma_{e,f}} := \frac{2\sigma_e\sigma_f}{\sigma_e+\sigma_f}\,.
\end{align*}
The penalty parameter
$\eta$ has to be chosen large enough to ensure
coercivity.

This derivation yields the SIPG
formulation \cite{wheeler1978,riviere2002pee}
or for weighted averages the \emph{Symmetric Weighted
  Interior Penalty Galerkin} (SWIPG or SWIP) method\mbox{\cite{dipietro2008}}:

Find $\ucorr_h \in V_h$ such that
\begin{subequations}
  \label{eq:subtraction-sipg}
  \begin{align}
    &a(\ucorr_h,v_h) + J (\ucorr_h,v_h) = l(v_h) \quad \text{ for all } v_h\in V_h, \label{eq:weakform}\\
    \intertext{with}
    \nonumber
    a(\ucorr_h, v_h) = & ~\tilde{a}(\ucorr_h, v_h) + \tilde{a}^{_\text{sym}}(\ucorr_h, v_h) \\
      = &
      \int\limits_{\Omega} \sigma \nabla_h \ucorr_h \cdot \nabla_h v_h \,dx
      - \int\limits_{\Gamma_\text{int}} \av{ \sigma \nabla_h \ucorr_h } \jp{v_h} +
       \av{\sigma \nabla_h v_h} \jp{\ucorr_h} \,ds\,,\\
    J(\ucorr_h, v_h) =&~ \eta \int\limits_{\Gamma_\text{int}}
       \frac{\hat\sigma_\gamma}{h_\gamma}
       \jp{ \ucorr_h } \jp{ v_h } \,ds\,,\\
    \nonumber
    l(v_h) = & - \int\limits_{\Omega} \scorr \nabla \uinf \cdot \nabla_h v_h \,dx\\
      & + \int\limits_{\Gamma_\text{int}} \av{ \scorr \nabla \uinf } \jp{v_h} \,ds
        - \int\limits_{\partial \Omega} \sinf \partial_{\mathbf{n}} \uinf v_h \,ds\,.
  \end{align}
\end{subequations}
Given the correction potential $\ucorr_h$, the full potential $u_h$ can be
reconstructed as $u_h = \ucorr_h + \uinf$.

\begin{rem}[Discrete Properties]
  As $a(\ucorr_h,v_h)$ and $J(\ucorr_h,v_h)$ are the same operators as
  in \mbox{\cite{dipietro2008}}, the following properties follow immediately: The
  proposed SIPG discretization \eqref{eq:subtraction-sipg} is
  consistent and adjoint-consistent with the strong problem
  \eqref{eq:subtraction-strong}, and for a sufficiently large constant
  $\eta > 0$ it has a unique solution.
\end{rem}
\begin{rem}[Conservation Property]
  Furthermore, for $K \subseteq \mathcal{T}_h(\Omega)$, \eqref{eq:subtraction-sipg} fulfills a discrete
  conservation property
  \begin{equation*}
    \int\limits_{\mathclap{\partial K}}
    \underbrace{\av{\sigma \nabla_h \ucorr_h}
      - \eta \frac{\hat{\sigma}_\gamma}{\hat{h}_\gamma} \jp{\ucorr_h}}_{\vec{j}_h^\corr}\,ds =
    \int\limits_{K}
    \underbrace{- \nabla \scorr \nabla \uinf}_{f^\corr} \,ds
  \end{equation*}
  with the discrete flux
  $\vec{j}^\corr_h$.
  For $h\rightarrow 0$, the jump $\jp{\ucorr_h}$ vanishes and the
  discrete flux $\vec{j}^\corr_h$ converges to the flux
  $\vec{j}^\corr$ as defined in \eqref{eq:continuity2}.
\end{rem}
\section{Methods}

\subsection{Implementation and parameter settings}
We implemented the DG-FEM subtraction approach in the
DUNE framework \cite{dune08:1,dune08:2} using the DUNE PDELab toolbox
\cite{bastian2010generic}. For reasons of comparison, we also implemented the
CG-FEM subtraction approach in the same framework.
We use linear Ansatz-functions
 for both the DG (i.e., $k=1$ in \eqref{eq:brokenspaces}) and CG
approaches throughout this study. On a given triangulation
$\mathcal{T}_h$, we choose basis
functions $\{\phi^i_h\}$, $i \in [0,N_h)$, with local support, where
$N_h$ denotes the number of unknowns. The penalty parameter $\eta$ was chosen to $\eta$ = 0.39.
For the CG simulations, a Lagrange basis with the usual hat functions
is employed, whereas for the DG case, elementwise $L^2$-orthonormal functions are
chosen. In this setup ($k=1$, hexahedral mesh), we
have eight unknowns per mesh cell for the DG approach, i.e., $N_h = 8
\times \#\text{cells}$, and one unknown per vertex for the CG
approach, i.e., $N_h = \#\text{vertices}$.
Evaluating the bilinear forms $a(\cdot,\cdot)$, $J(\cdot,\cdot)$, and
the right-hand side $l(\cdot)$ leads to a linear system $A \cdot x =
b$, where $x \in \mathbb{R}^{N_h}$
denotes the coefficient vector, and the approximated solution of \eqref{eq:subtraction-sipg} is $u_h^{corr} = \sum_i x_i \phi^i_h$.
Furthermore, $A \in \mathbb{R}^{N_h \times N_h}$
is the matrix representation of the bilinear
operator $a+J$ and $b \in \mathbb{R}^{N_h}$ the right-hand-side vector:
\begin{align*}
  A_{ij} &= a(\phi^j_h,\phi^i_h) + J(\phi^j_h,\phi^i_h) & i,j & \in [0,N_h),\\
  b_{i} &= l(\phi^i_h) & i & \in [0,N_h)\,.
\end{align*}
The resulting matrix $A$ has a sparse block structure with small dense
blocks, in our case of dimension $8 \times 8$. The outer
structure is similar to that of a finite volume discretization,
i.e., rows corresponding to each grid cell and one off-diagonal entry
for each cell neighbor.
By now, a range of efficient solvers for DG discretizations
is available, using multigrid \cite{NLA:NLA1816} or
domain decomposition methods \cite{Antonietti:2011:DG-DD}. The computation/solving times for the CG- and DG-FEM for realistic six-layer head models and a realistic EEG sensor configuration are compared in the results section.

\subsection{Volume conductor models}
To validate and compare the accuracy of these numerical schemes, we used
four-layer sphere volume conductor models, where an analytical solution exists and can
be used as a reference \cite{CHW:Mun93}.
\begin{table}[!t]
\renewcommand{\arraystretch}{1.3}
\caption{Conductive compartments (from in- to outside)}
\label{tab:compartments}
\centering
\begin{tabular}{lrr}
\hline
Compartment & \multicolumn{1}{l}{Outer Radius} & \multicolumn{1}{l}{Conductivity} \\
\hline \hline
Brain & 78 mm & 0.33 S/m \\
CSF & 80 mm & 1.79 S/m \\
Skull & 86 mm & 0.01 S/m \\
Skin & 92 mm & 0.43 S/m \\
\hline
\end{tabular}
\end{table}
For the four spherical compartments, representing
brain, cerebrospinal fluid (CSF), skull, and skin, we chose radii and conductivities as shown in
Table \ref{tab:compartments}. As discussed in the introduction and in \ref{sec:skullleakage},
we used hexahedral meshes in our study. To be able to distinguish between
numerical and geometry errors, i.e., errors due to the discrete approximation of the continuous PDE and errors due to an inaccurate representation of the geometry, respectively, we constructed a variety of head
models with different segmentation resolutions (1 mm, 2 mm, and
4 mm) and for each of these we again used different mesh resolutions
(1 mm, 2 mm, and 4 mm).
\begin{figure*}[!t]%
\centering
\setlength{\tabcolsep}{.005\textwidth}
\begin{tabular}{ccc}
\segres{1}{1} & \segres{2}{2} & \segres{4}{4} \\
\includegraphics[width=.3\textwidth]{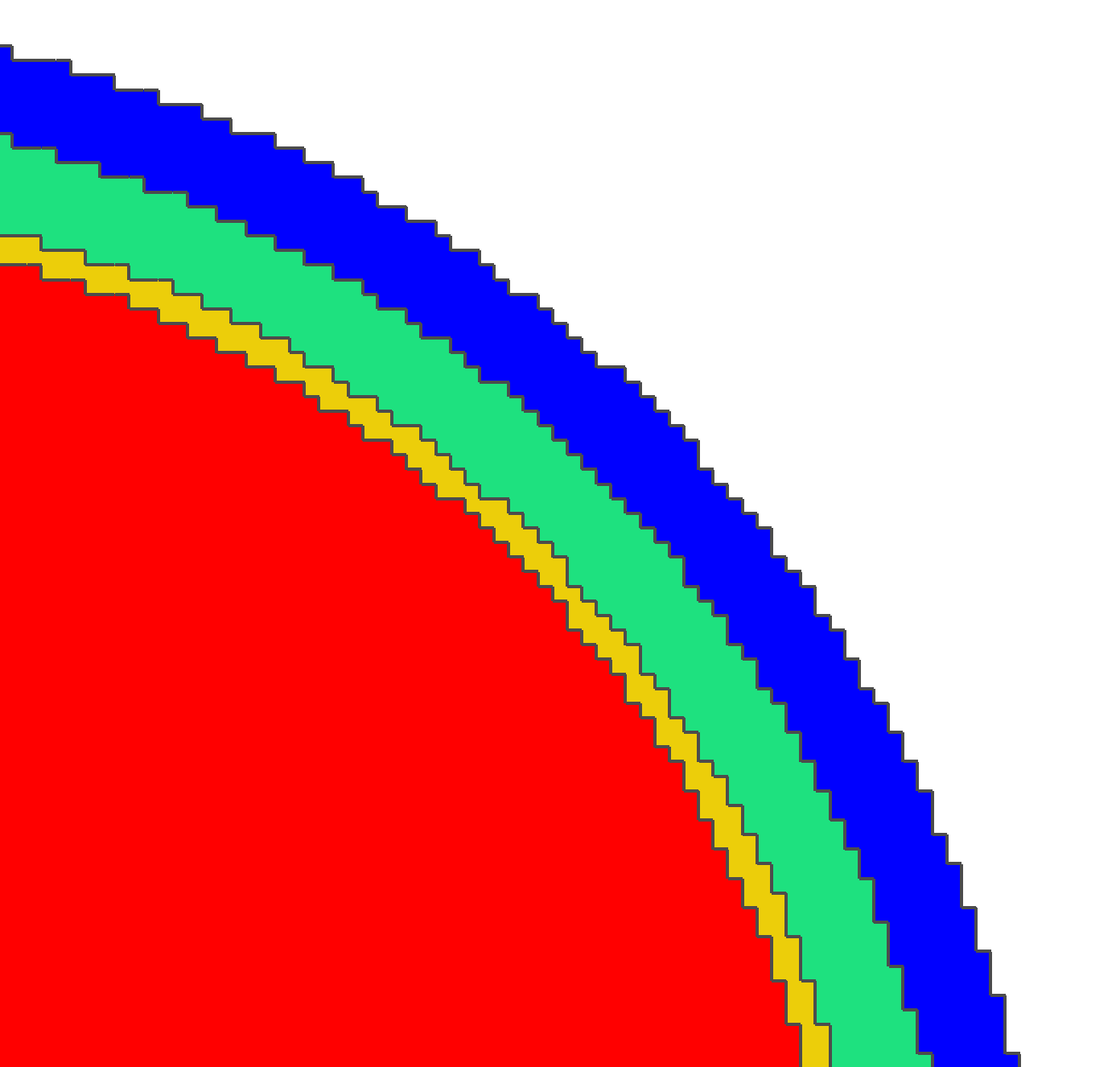} & \includegraphics[width=.3\textwidth]{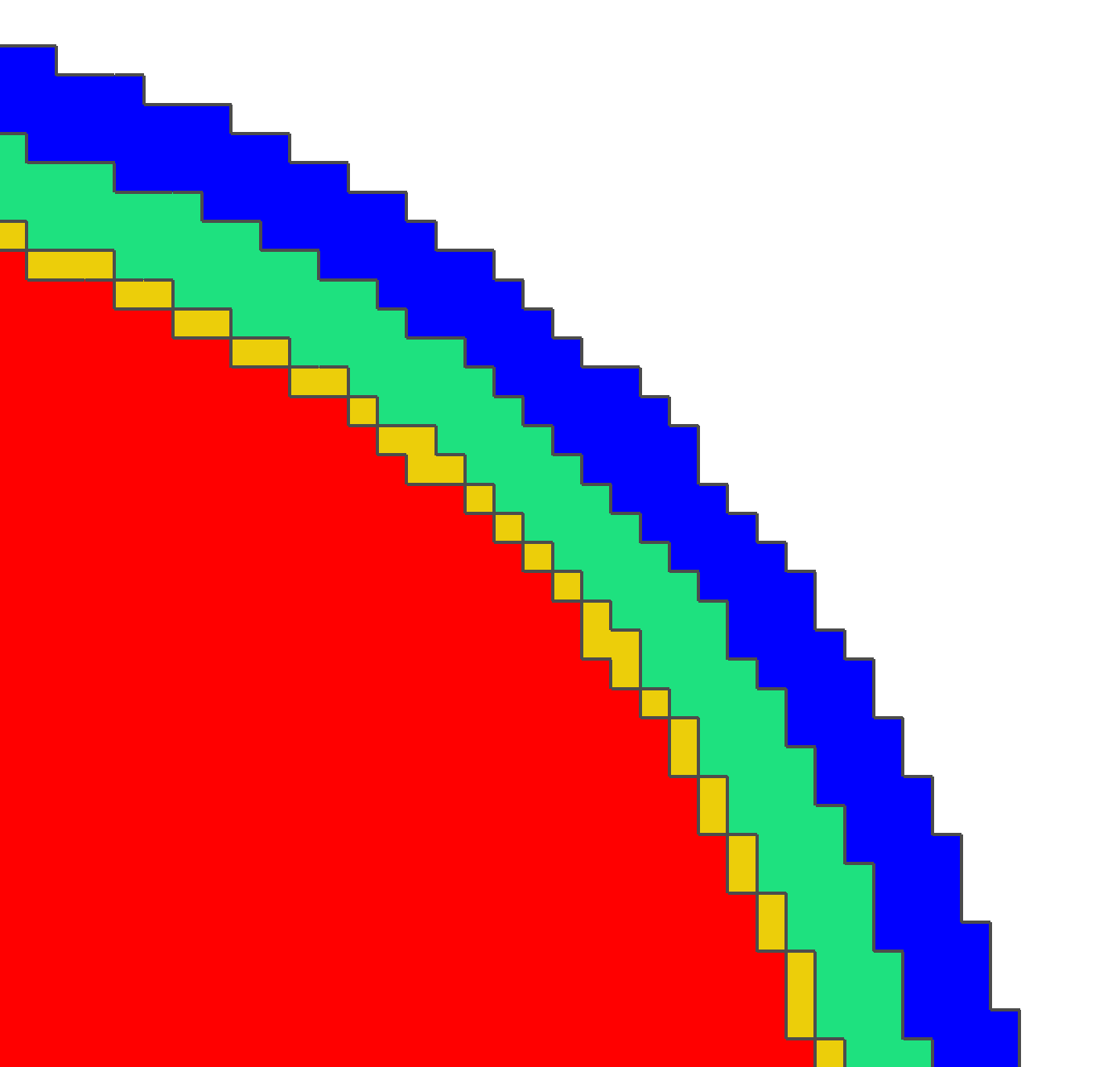} & \includegraphics[width=.3\textwidth]{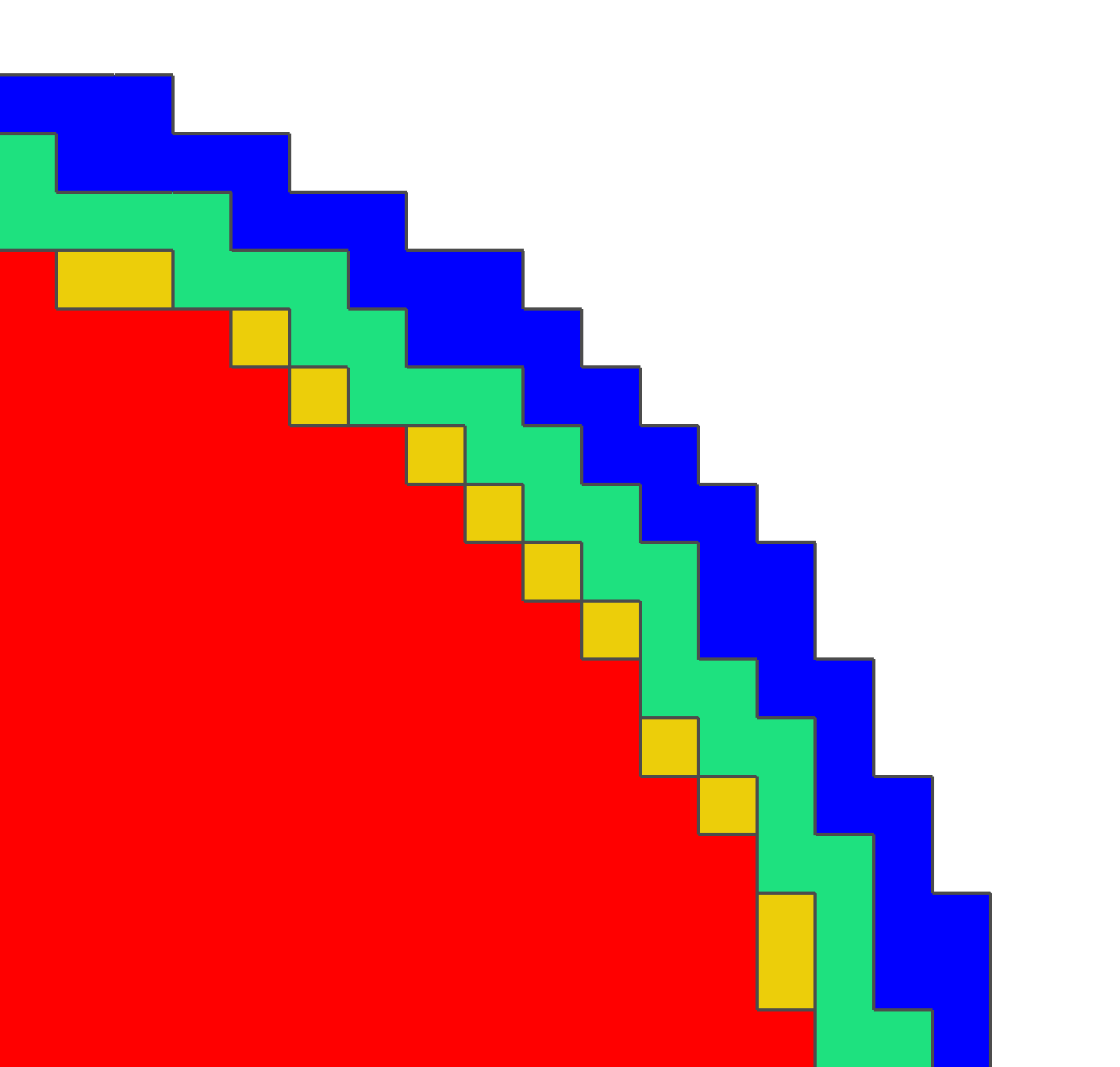}
\end{tabular}
\caption{Visualization of models \segres{1}{1}, \segres{2}{2} and \segres{4}{4} (from left to right), cut in x-plane at the origin; coloring is brain - red, CSF - yellow, skull - green, skin - blue.}%
\label{fig:vis-mesh}%
\end{figure*}
\begin{table}[!t]
\renewcommand{\arraystretch}{1.3}
\caption{Model properties (from left to right): segmentation resolution (\it{Seg.}), mesh resolution (\it{h}), number of nodes, number of elements}
\label{tab:models}
\centering
\begin{tabular}{lrrrr}
\hline
& Seg. & \multicolumn{1}{l}{h} & \multicolumn{1}{l}{\#nodes} & \multicolumn{1}{l}{\#elements} \\
\hline \hline
\segres{1}{1} & 1 mm & 1 mm & 3,342,701 & 3,262,312 \\
\segres{2}{1} & 2 mm & 1 mm & 3,343,801 & 3,263,232 \\
\segres{2}{2} & 2 mm & 2 mm & 428,185 & 407,907 \\
\segres{4}{1} & 4 mm & 1 mm & 3,351,081 & 3,270,656 \\
\segres{4}{2} & 4 mm & 2 mm & 429,077 & 408,832 \\
\segres{4}{4} & 4 mm & 4 mm & 56,235 & 51,104 \\
\hline
\ci{6}{hex}{1mm} & 1 mm & 1 mm & 3,965,968 & 3,871,029 \\
\ci{6}{hex}{2mm} & 2 mm & 2 mm & 508,412 & 484,532 \\
\ci{6}{tet}{hr} & - & - & 2,242,186 & 14,223,508 \\
\hline
\end{tabular}
\end{table}
The details of these head models are listed in Table \ref{tab:models},
and Figure \ref{fig:vis-mesh} visualizes a subset of the used models.

To further evaluate the sensitivity of the different numerical
methods to leakage effects, we intentionally generated spherical
models with skull leakages.
Therefore, we chose the model \segres{2}{2} and reduced the
radius of the outer skull boundary to 82 mm, 83 mm, and 84 mm,
resulting in skull thicknesses of 2 mm, 3 mm, and 4 mm, respectively.
This way, we were able to generate a leakage scenario similar to the
one presented in Figure \ref{fig:shortcut}, while preserving the
advantage of a spherical solution that can be used for error
evaluations.
\begin{table}[!t]
\renewcommand{\arraystretch}{1.3}
\caption{Model parameters}
\label{tab:leaks}
\centering
\begin{tabular}{lcrrr}
\hline
& \multicolumn{1}{l}{out. Skull Rad.} & \multicolumn{1}{l}{\#leaks} \\
\hline \hline
\segresR{2}{2}{82} & 82 mm & 10,080 \\
\segresR{2}{2}{83} & 83 mm & 1,344 \\
\segresR{2}{2}{84} & 84 mm & 0 \\
\hline
\end{tabular}
\end{table}
Table \ref{tab:leaks} indicates the number of leaks for each model, i.e.,  the
number of vertices belonging to both an element labeled as skin and an
element labeled as CSF or brain.

\subsection{Sources}
Since the numerical accuracy depends on the local mesh structure and the source eccentricity, we used 10 source eccentricities and, for each eccentricity,
randomly distributed 10 sources. Thereby, the variability of the numerical accuracy can be captured for each eccentricity. We evaluated the accuracy for both radial and
tangential dipole directions; however, we present the results only for
radial directions here. The results for dipoles with tangential
direction are very similar to these with slightly lower errors than for radial dipoles.

To make the effect of skull leakage more accessible, we additionally
generated visualizations of the current for one dipole fixed at
position $(1,47,47)$, which corresponds to an element center, and
fixed direction $(0,1,1)$ for both the CG- and DG-FEM and for
all three models with reduced skull thickness as shown in Table \ref{tab:leaks}.
We visualized a cut
through the x-plane at the dipole position and chose to visualize both
the direction and strength of the electric flux for each numerical
method and model (Figure \ref{fig:vis-current}). Furthermore, the relative
change in strength and the flux difference between the
numerical methods, described by the metrics \lnMAGj and \totDIFFj as defined in the next section, were visualized for each model (Figure \ref{fig:vis-leak}).

\subsection{Error metrics}
To achieve a result that purely represents the numerical
and segmentation accuracy and is independent of the chosen sensor configuration,
we evaluated the solutions on the whole outer layer. We use two error measures to
distinguish between topography and magnitude errors,
the relative difference measure (RDM),
\begin{equation}
\label{eq:eval:rdm}
\begin{split}
  RDM ( u_h, u ) &= \left\| \frac{u_h}{\| u_h \|_2} - \frac{u}{\| u \|_2}  \right\|_2,
 \end{split}
\end{equation}
and the logarithmic magnitude error (lnMAG),
\begin{equation}
\label{eq:eval:mag}
\begin{split}
 lnMAG ( u_h, u ) &= \ln \left(\frac{\| u_h \|_2}{\| u \|_2}\right).
 \end{split}
\end{equation}
Besides presenting the mean RDM and lnMAG errors over all sources at a certain eccentricity (see, e.g., left subfigures in Figure \ref{fig:radial-geom}), we also present results in separate boxplots (see, e.g., right subfigures in Figure \ref{fig:radial-geom}). The boxplots
show maximum and minimum error over all source positions at a certain eccentricity, indicated by upper and lower error bars. This allows to display the overall variability of the error. Furthermore, the boxplots show the upper and lower quartiles. The interquartile range is marked by a box; a black dash shows the median. Henceforth, the interquartile range will also be denoted as spread.
 Note the different presentation of source eccentricity on the x-axes in
the left and right subfigures.

To evaluate the local changes of the current, we furthermore visualize for each mesh element $E$ the logarithm of the local
change in current magnitude
\begin{equation}
  \lnMAGj(E) =  \ln \left(\frac{\|\vec{j}_{h,\text{\tiny CG}}(x_E)\|_2}{\|\vec{j}_{h,\text{\tiny DG}}(x_E)\|_2}\right),
\end{equation}
and the total local current difference
\begin{equation}
  \totDIFFj(E) = \vec{j}_{h,\text{\tiny CG}}(x_E) - \vec{j}_{h,\text{\tiny DG}}(x_E),
\end{equation}
where $x_E$ denotes the centroid of mesh element $E$ (see Figure \ref{fig:vis-leak}).

We can exploit that, due to the relation $\ln(1 + x) \approx x$ for small $\|x\|$, we have $lnMAG \approx \|u_h\|_2 / \|u\|_2 - 1$ for small deviations. In consequence, $100 \cdot lnMAG$ is about the change of the magnitude in percent. The same approximations are valid for the \lnMAGj.

\subsection{Realistic head model}
\begin{figure}[tb]%
\centering
\includegraphics[width=.4\textwidth]{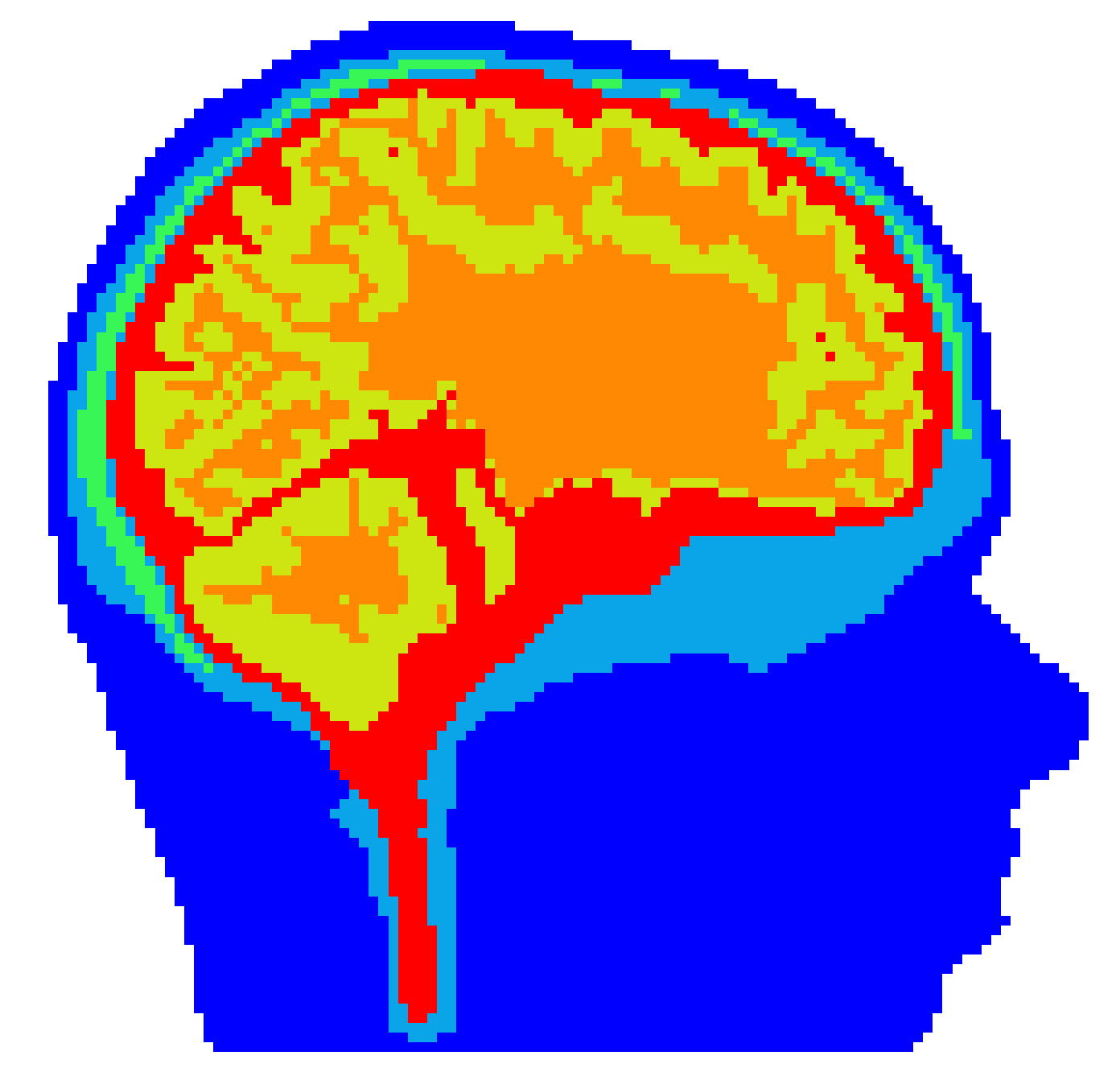} \hfil
\includegraphics[width=.4\textwidth]{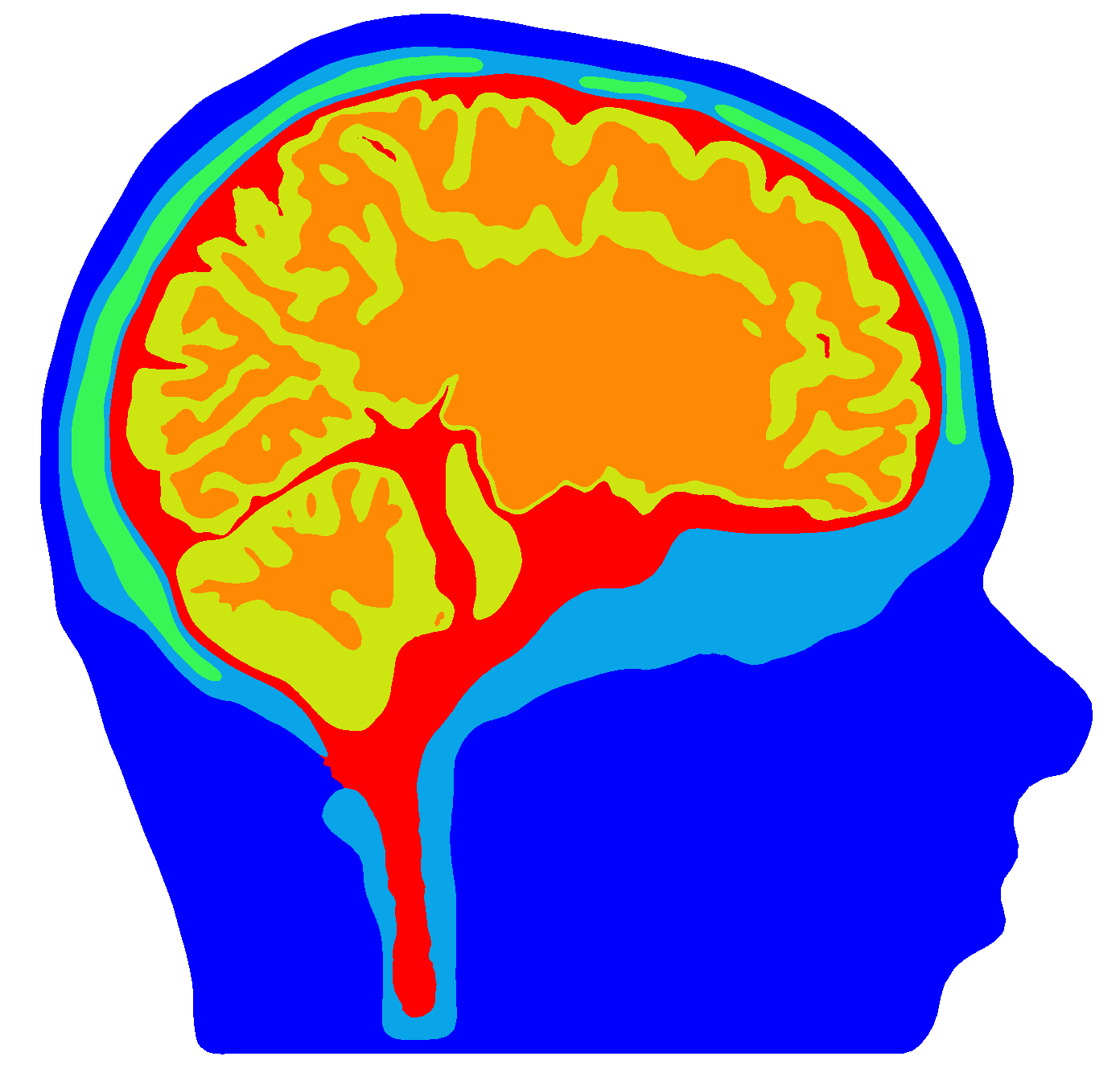}
\caption{Visualization of realistic six-compartment hexahedral \ci{6}{hex}{2mm}, $h = 2 mm$, (left) and high-resolution reference head model \ci{6}{tet}{hr} (right).}%
\label{fig:headmodel}%
\end{figure}

To complete the numerical evaluations, the differences between the CG- and DG-FEM were evaluated in a more realistic scenario. Based on MRI recordings, a segmentation considering six tissue compartments (white matter, gray matter, cerebrospinal fluid, skull compacta, skull spongiosa, and skin) that includes realistic skull openings such as the foramen magnum and the optic nerve canal was generated. Based on this segmentation, three realistic head models were generated. Two hexahedral head models with mesh resolutions of 1 mm and 2 mm, \ci{6}{hex}{1mm} and \ci{6}{hex}{2mm}, were generated, resulting in 3,965,968 vertices and 3,871,029 elements, and 508,412 vertices and 484,532 elements, respectively (Figure \mbox{\ref{fig:headmodel}}). For both models, the segmentation resolution is identical to the mesh resolution. As the model with a mesh width of 2 mm was not corrected for leakages, 1,164 vertices belonging to both CSF and skin elements were found. These leakages were mainly located at the temporal bone. To calculate reference solutions, a high-resolution tetrahedral head model with 2,242,186 vertices and 14,223,508 elements, \ci{6}{tet}{hr}, was generated. For further details of this model and of the used segmentation, please refer to \mbox{\cite{JVorw2014,VorwerkDiss2016}}. The conductivities were chosen according to \mbox{\cite{JVorw2014}}. 4,724 source positions were placed in the gray matter with a normal constraint, and those that were not fully contained in the gray matter compartment, i.e., where the source was placed in an element at a compartment boundary, were excluded. As a result, 4,482 source positions remained for the 1 mm model and 4,430 source positions for the 2 mm model. An 80 channel realistic EEG cap was chosen as the sensor configuration. For both the CG- and DG-FEM, solutions in the 1 mm and 2 mm hexahedral head model were computed and the RDM and lnMAG are evaluated in comparison to the solution of the CG-FEM calculated using the tetrahedral head model.

The computations were performed on a Linux-PC with an Intel Xeon E5-2698 v3 CPU (2.30 GHz). The computation times for the CG- and DG-FEM in the models \ci{6}{hex}{1mm} and \ci{6}{hex}{2mm} were evaluated in the results section. Though an optimal speedup through parallelization can be achieved for both the transfer matrix computation and the right-hand-side setup, all computations were carried out without parallelization on a single core to allow for a reliable comparison.

\section{Results}
\begin{figure*}[!t]%
\centering
\includegraphics[width=.48\textwidth]{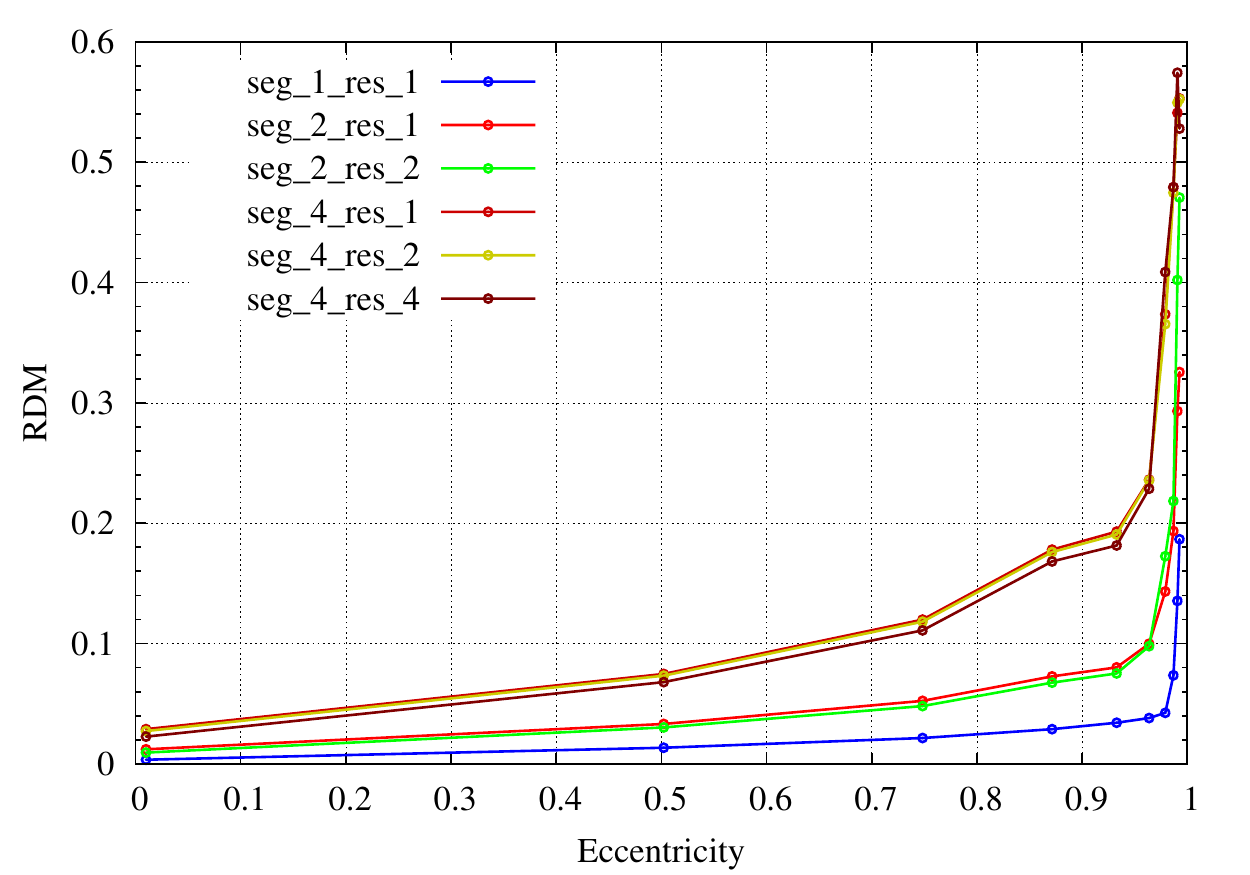} \hfill \includegraphics[width=.48\textwidth]{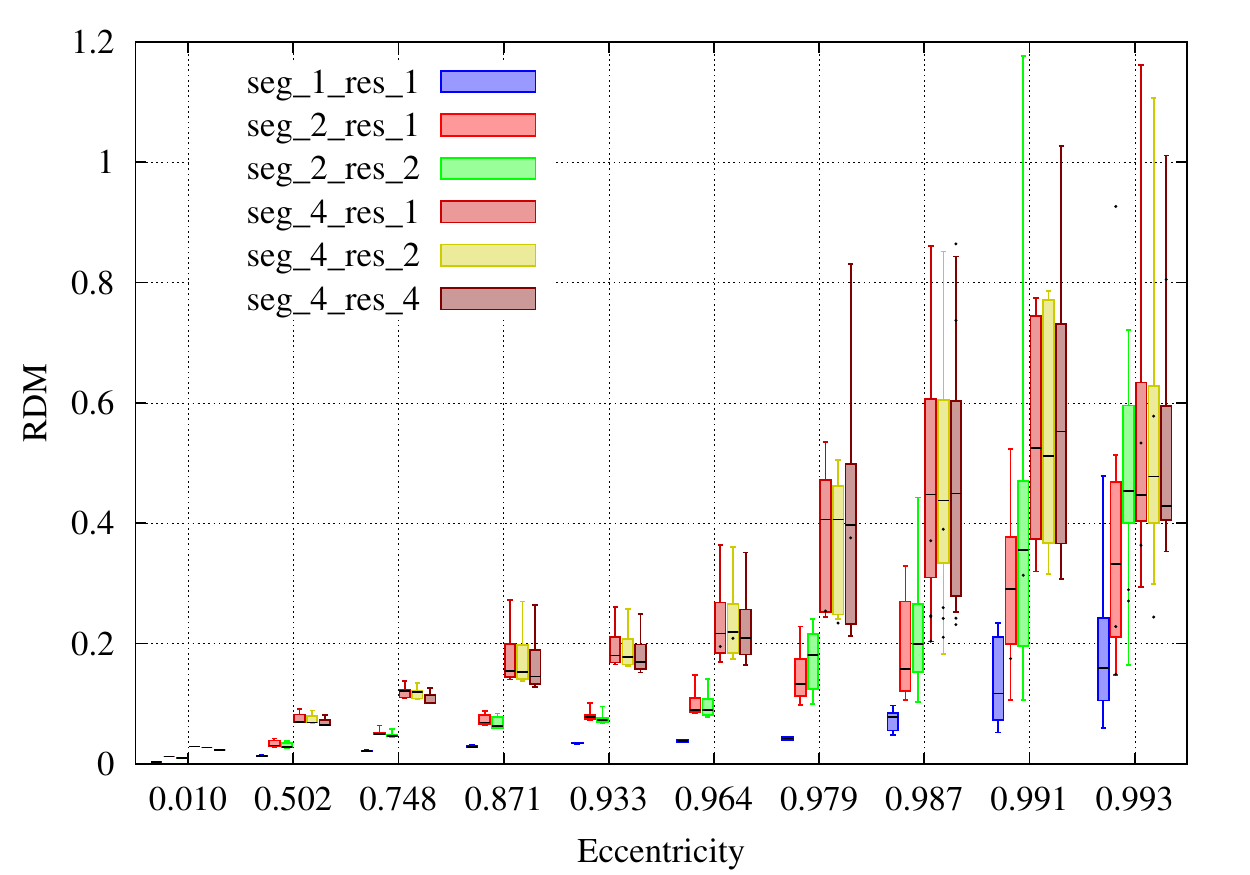}\\%
\includegraphics[width=.48\textwidth]{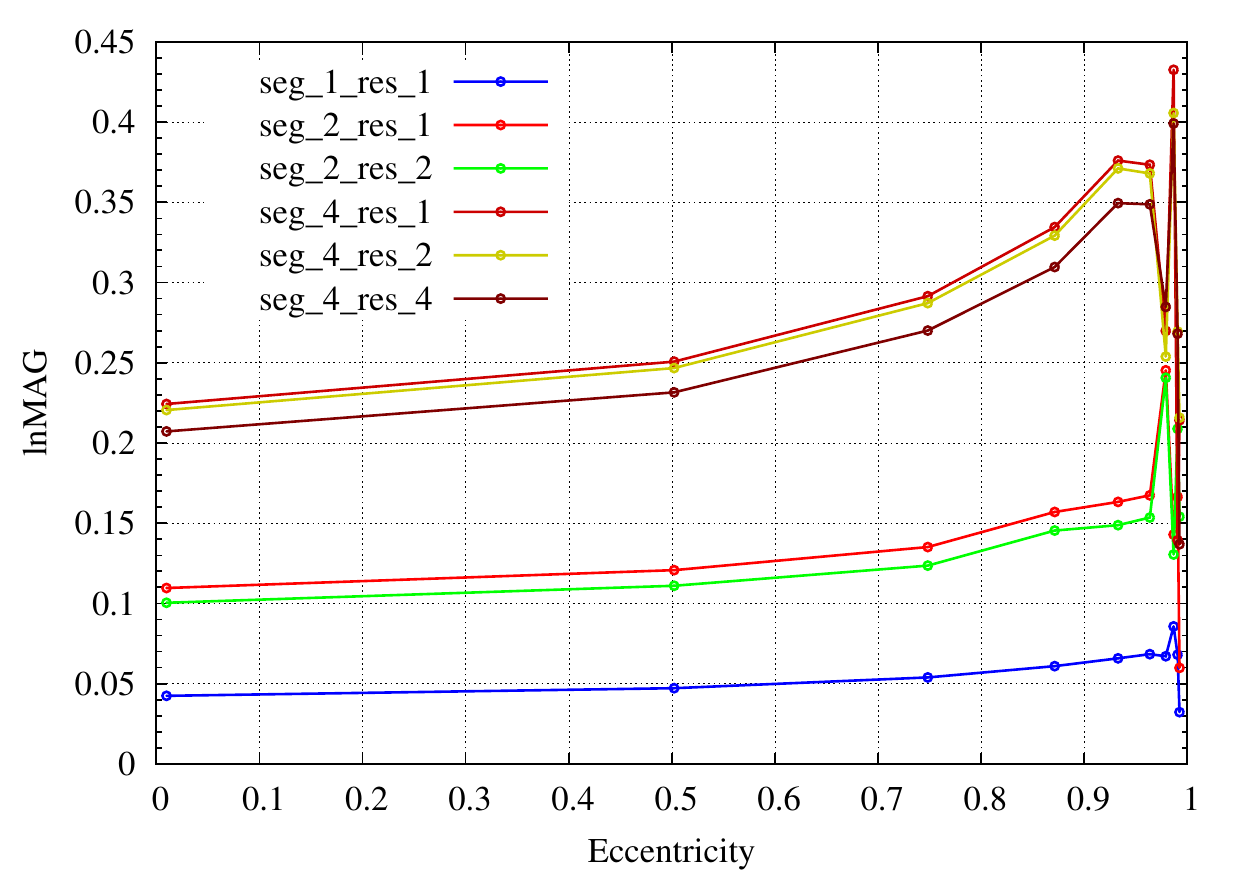} \hfill \includegraphics[width=.48\textwidth]{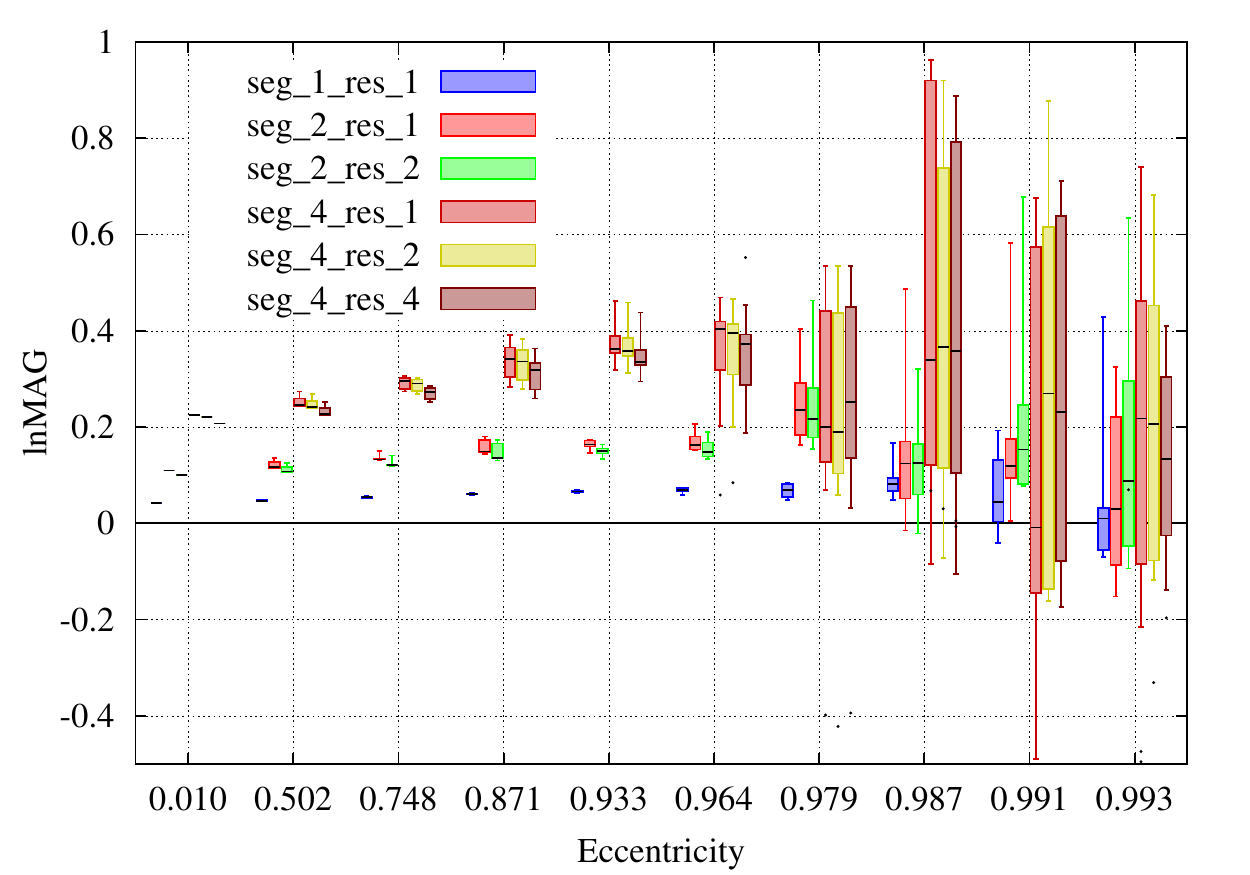}
\caption{Convergence for the DG-FEM with increasing mesh and/or segmentation resolution. Results of radial dipole computations. Visualized are the mean error (left column) and boxplots (right column) of the RDM (top row) and lnMAG (bottom row). Dipole positions that are outside the brain compartment in the discretized models are marked as dots.
Note the different scaling of the x-axes. Note that the error curve for model \segres{4}{1} is partly covered by that of model \segres{4}{2} in the top left figure.}%
\label{fig:radial-geom}%
\end{figure*}

\begin{figure*}[!t]%
\centering
\includegraphics[width=.48\textwidth]{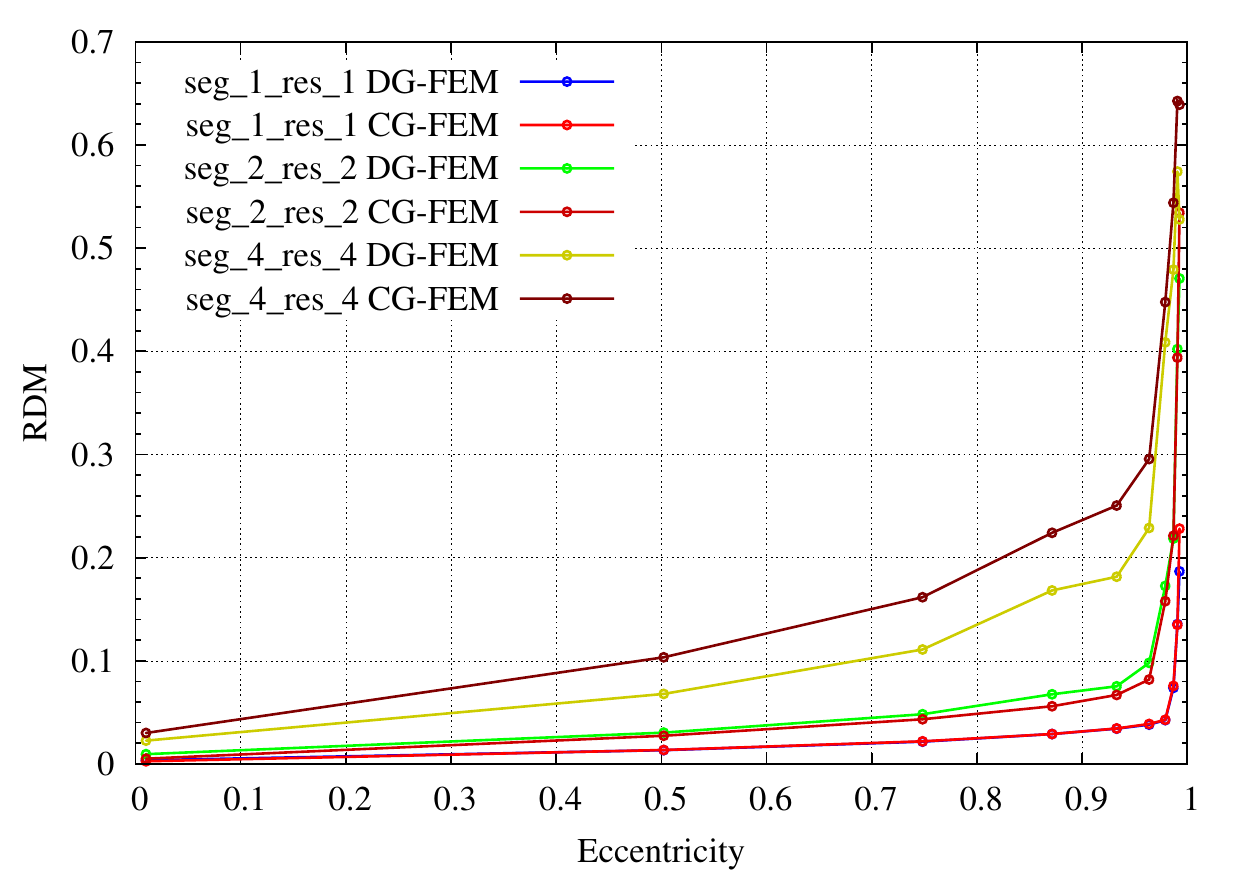} \hfill \includegraphics[width=.48\textwidth]{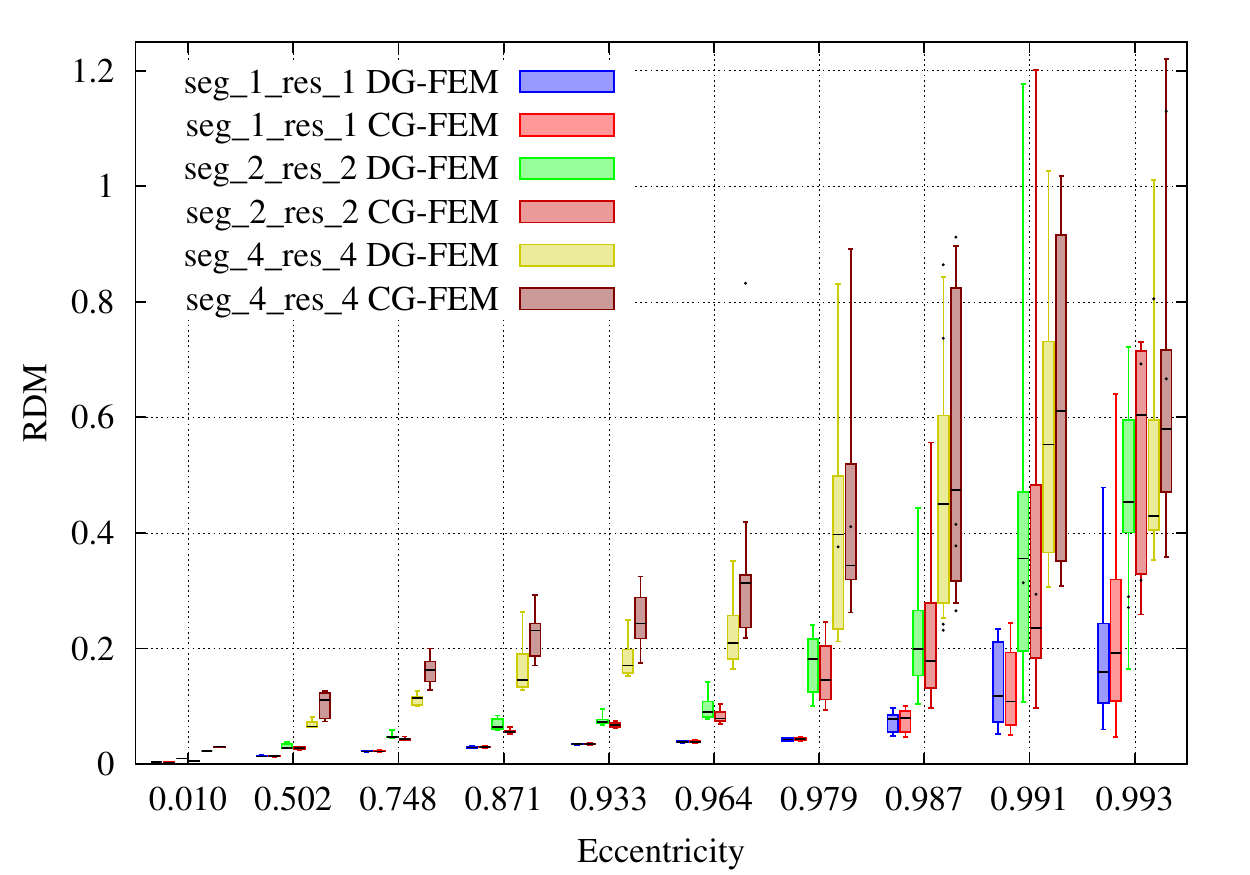}\\%
\includegraphics[width=.48\textwidth]{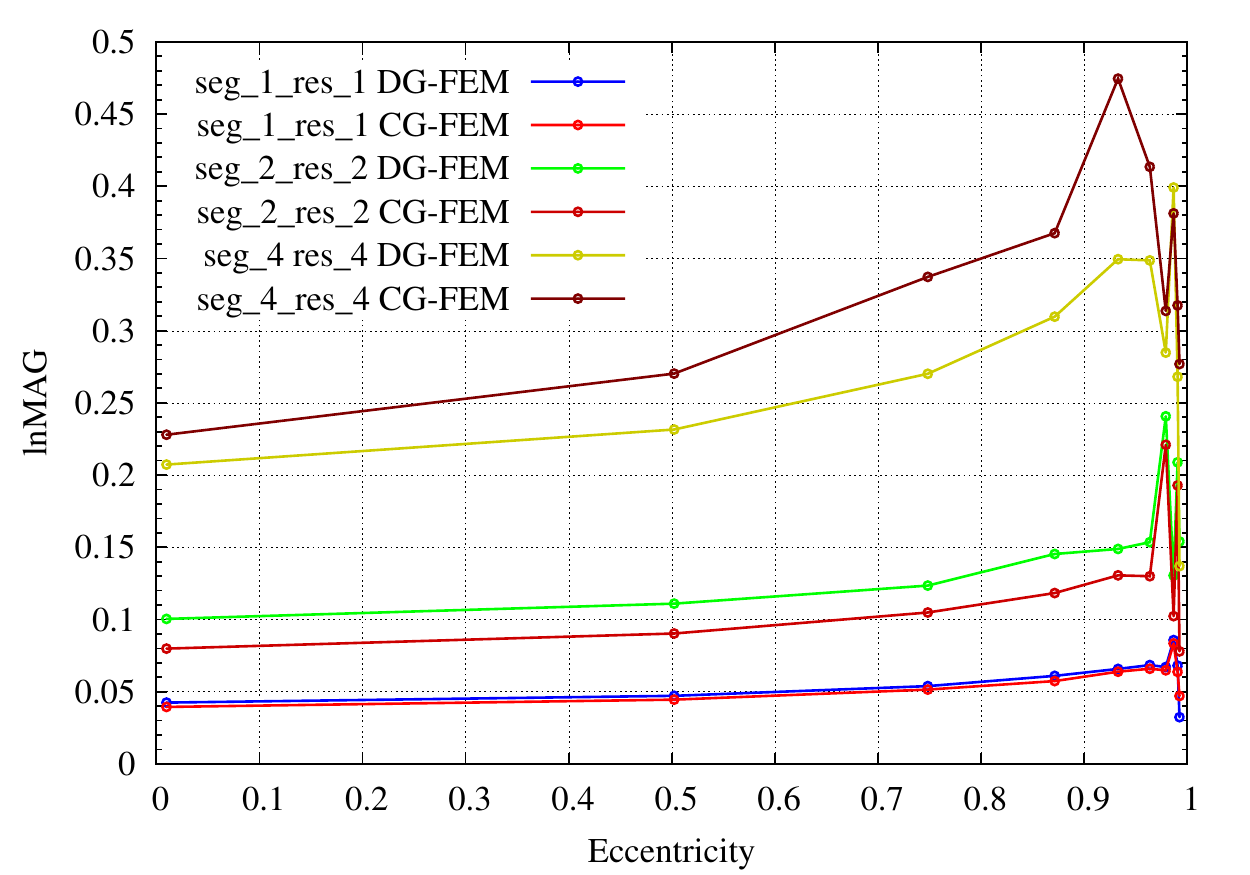} \hfill \includegraphics[width=.48\textwidth]{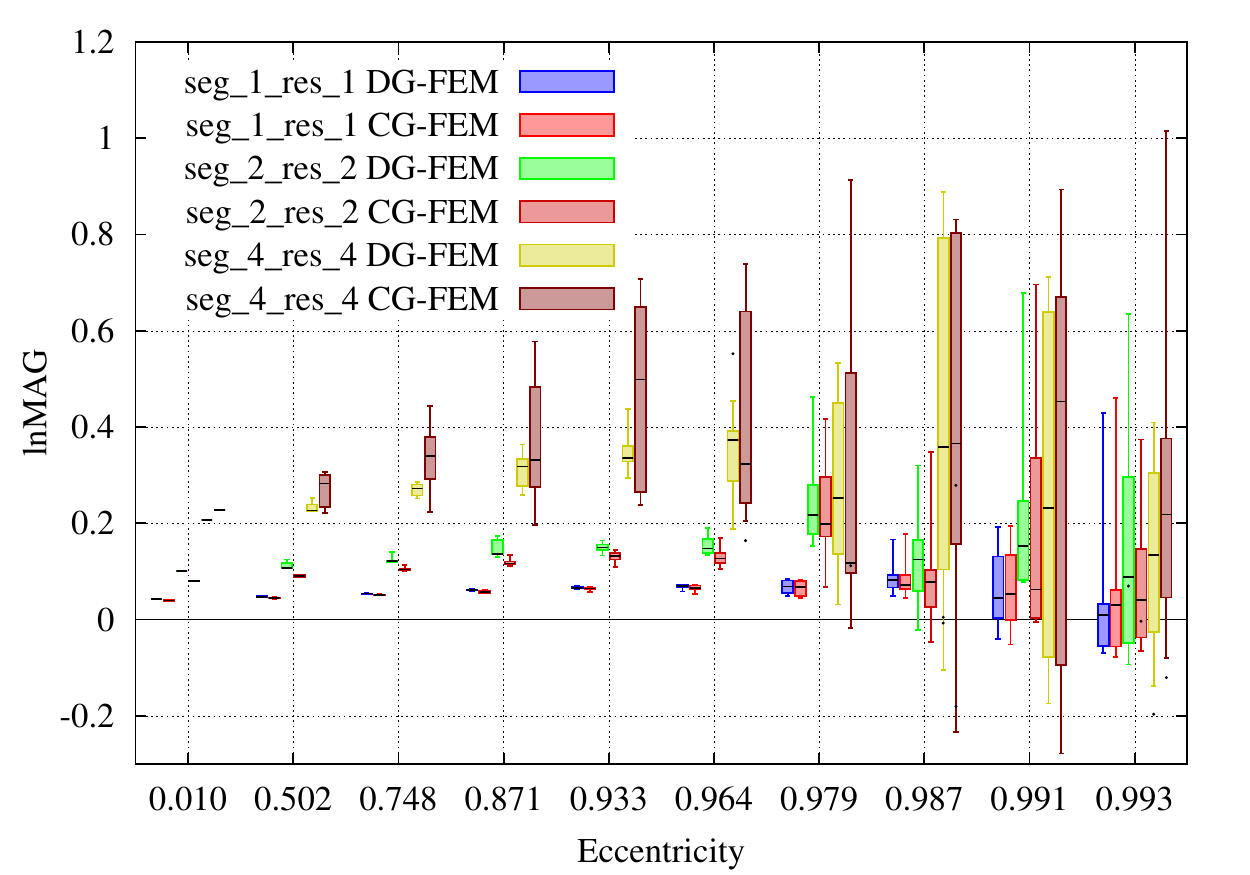}
\caption{Convergence for both the CG-FEM and DG-FEM with increasing mesh and segmentation resolution. Results of radial dipole computations. Visualized are the mean error (left column) and boxplots (right column) of the RDM (top row) and lnMAG (bottom row). Dipole positions that are outside the brain compartment in the discretized models are marked as dots.
Note the different scaling of the x-axes.}%
\label{fig:radial-conv}%
\end{figure*}

\begin{figure*}[!t]%
\centering
\includegraphics[width=.48\textwidth]{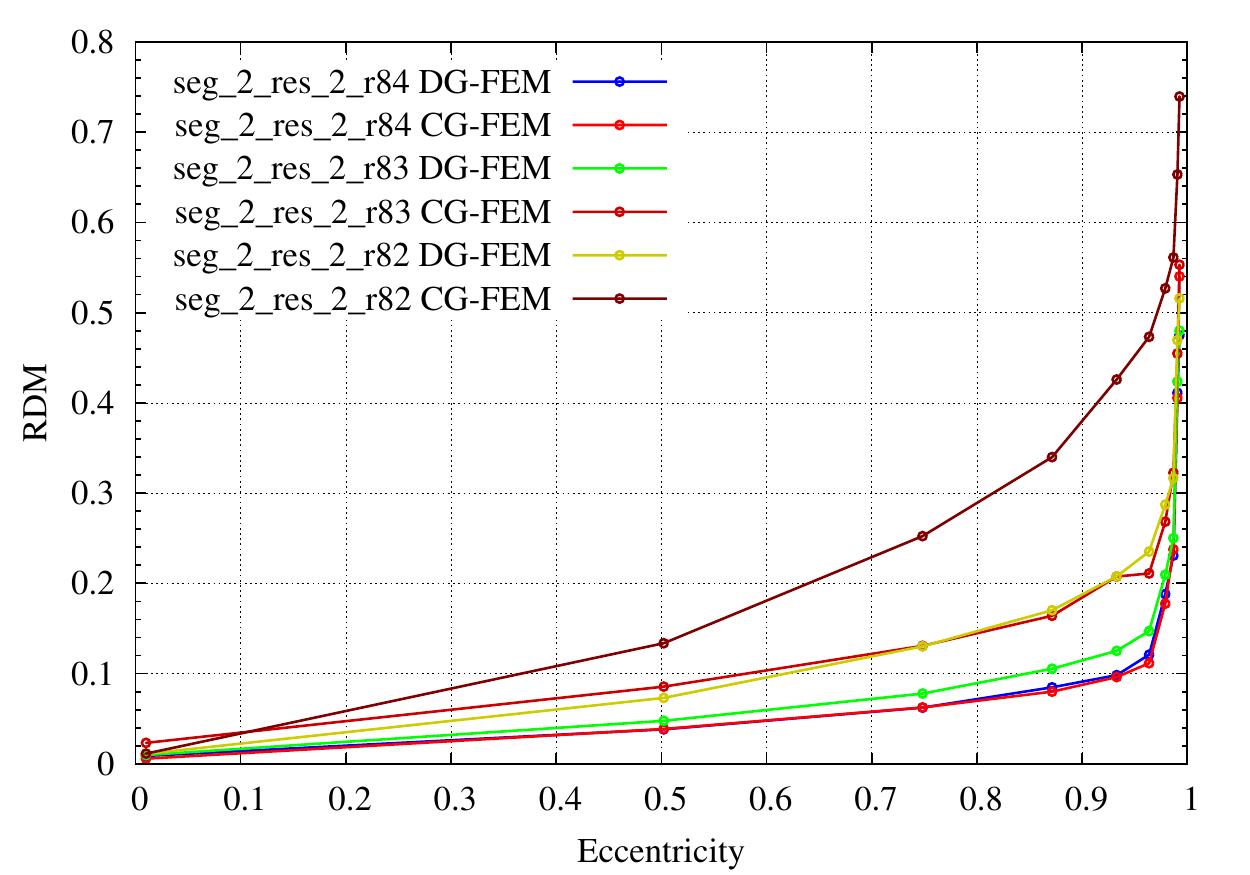} \hfill \includegraphics[width=.48\textwidth]{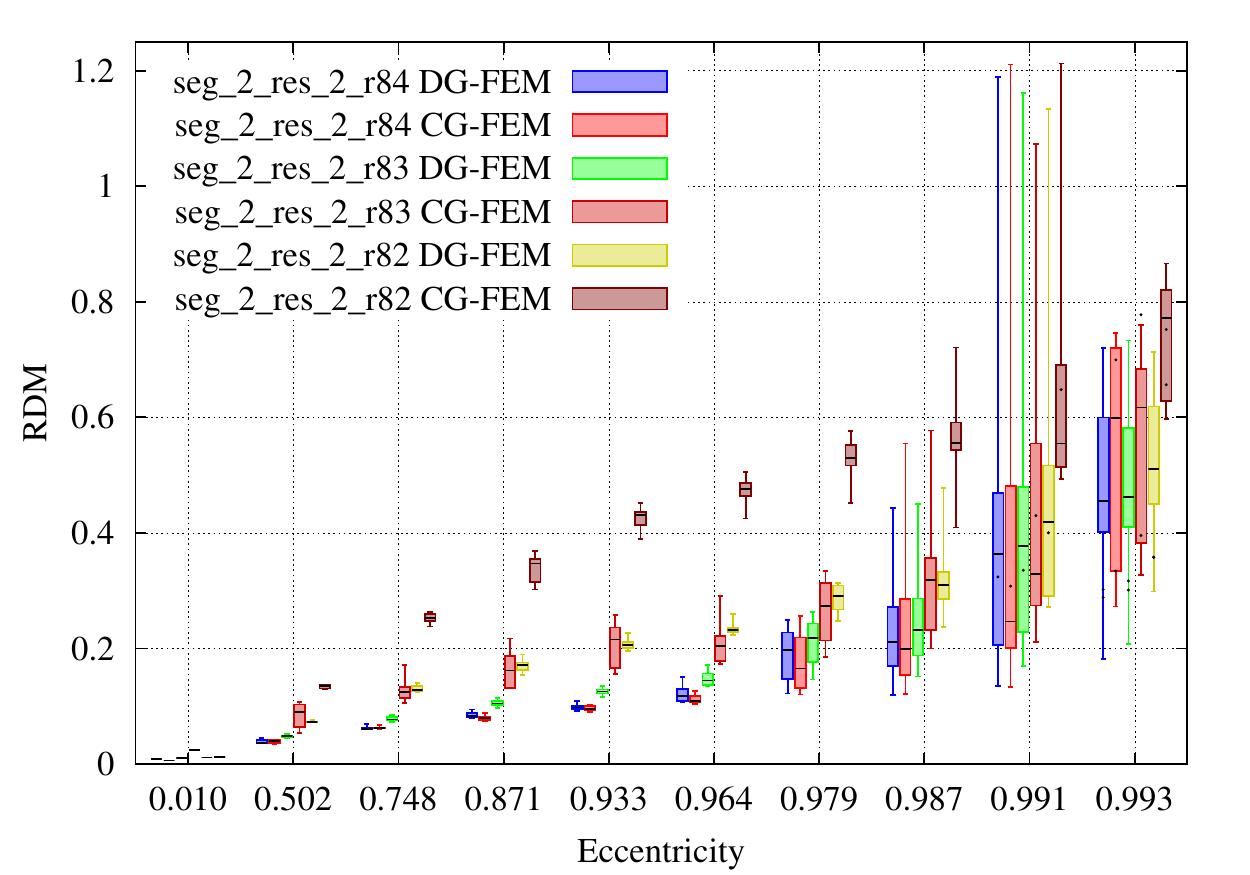}\\%
\includegraphics[width=.48\textwidth]{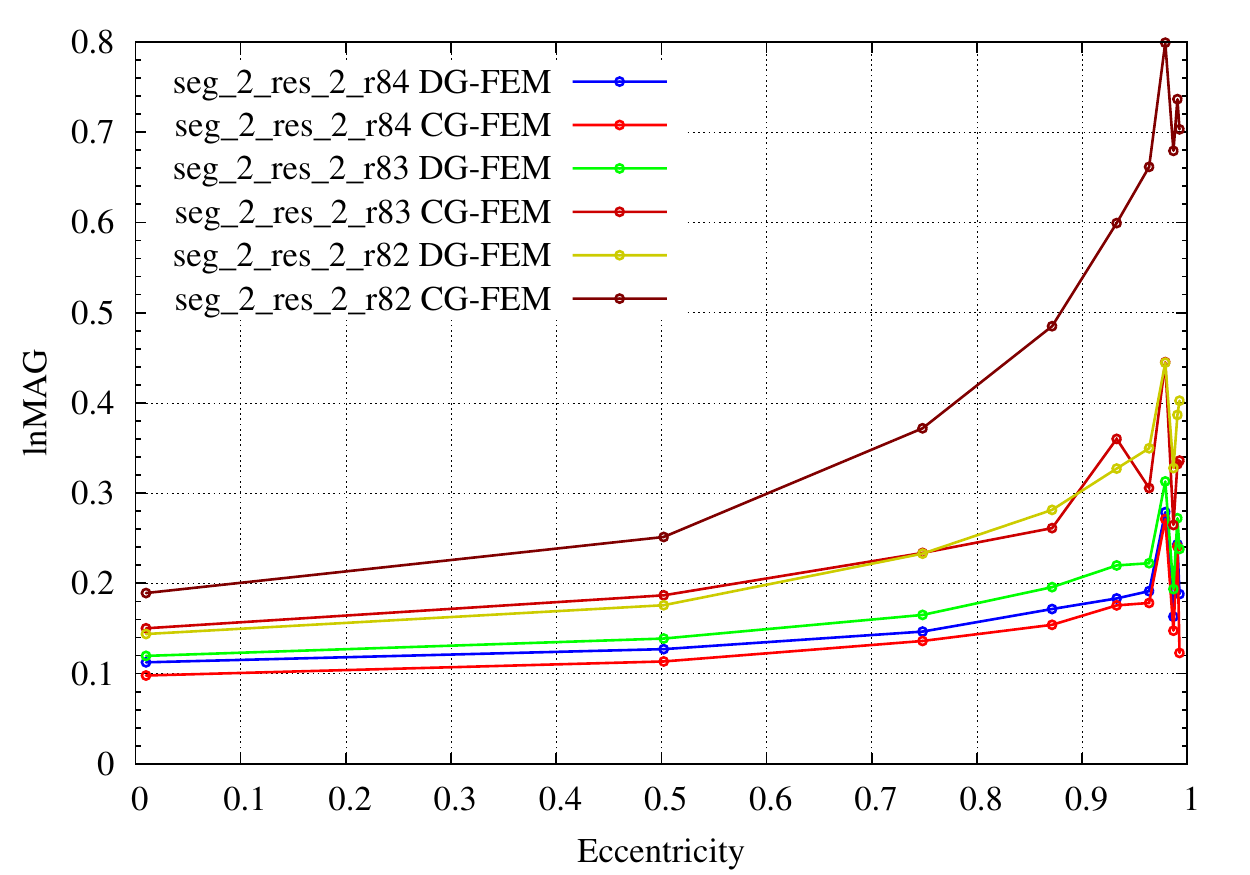} \hfill \includegraphics[width=.48\textwidth]{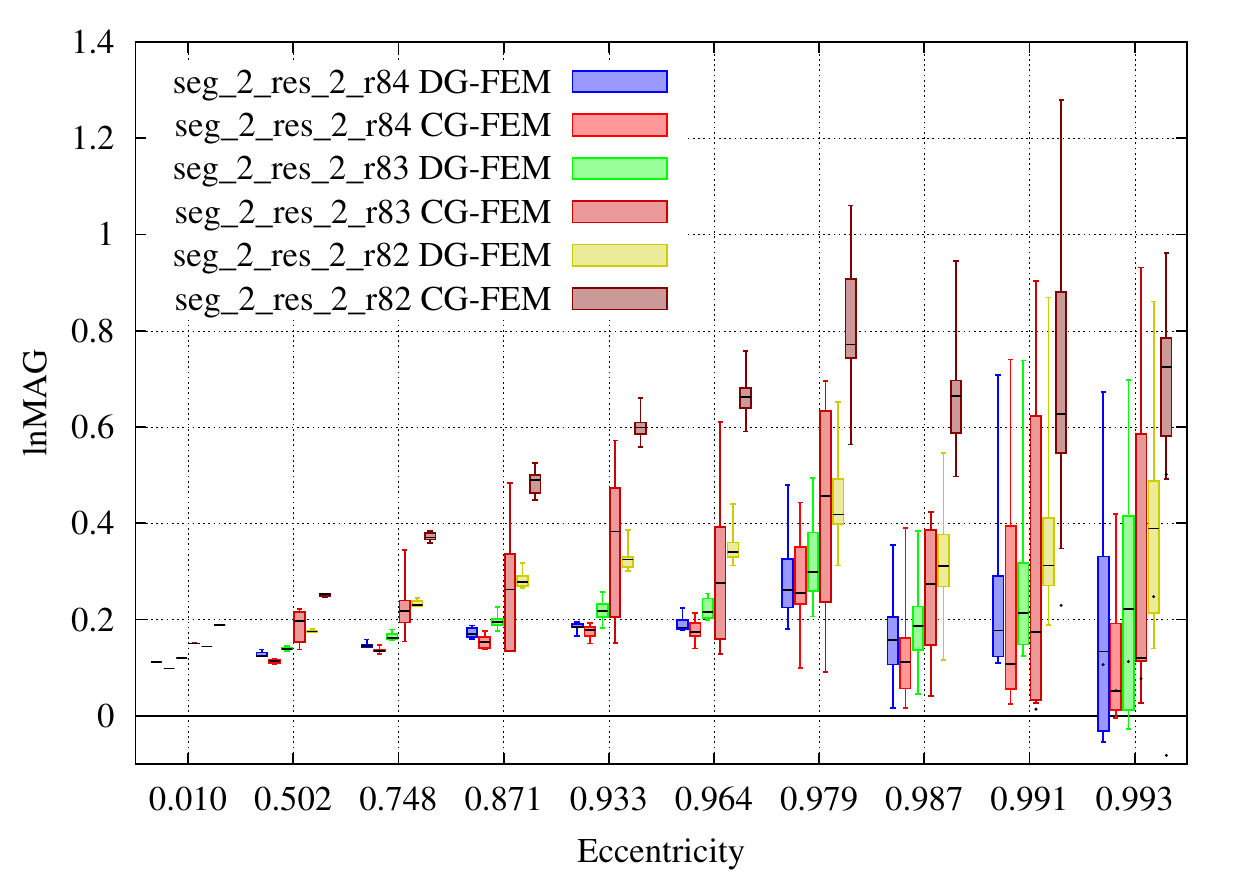}
\caption{Comparison of increase of errors for decreasing skull thickness between the CG-FEM and DG-FEM. Results of radial dipole computations. Visualized are the mean error (left column) and boxplots (right column) of the RDM (top row) and lnMAG (bottom row). Dipole positions that are outside the brain compartment in the discretized models are marked as dots. Note the different scaling of the x-axes.}%
\label{fig:radial-leak}%
\end{figure*}

Figure \ref{fig:radial-geom} shows the convergence of the RDM and lnMAG
errors for the DG method when increasing the segmentation resolution, i.e., improving the representation of the geometry.  Comparing the results for meshes \segres{1}{1},
\segres{2}{2}, and \segres{4}{4} shows the clear
reduction of both
the RDM and lnMAG when increasing mesh and segmentation resolution at the same time. The most accurate model
\segres{1}{1} achieves errors below 0.05 with regard to the RDM for
eccentricities up to 0.979, i.e., a distance of 1.6 mm to the brain/CSF boundary.
For an eccentricity of 0.987, i.e., a
distance of about 1 mm to the brain/CSF boundary, this error increases up
to maximally 0.1.
For even higher eccentricities, the errors clearly
increase up to maximal values of 0.5. However, the median error
clearly stays below 0.2 here and minimal errors are still at about
0.05. The behavior with regard to the lnMAG is very similar, being nearly
constant up to an eccentricity of 0.979, slightly increasing for an
eccentricity of 0.987, and strongly increasing with a high
error variability
for higher eccentricities. The errors for models \segres{2}{2}
and \segres{4}{4} are clearly higher than for model
\segres{1}{1}. However, additionally displaying the
results for the models with refined mesh resolution \segres{2}{1},
\segres{4}{1}, and \segres{4}{2},
where the geometry error, i.e., the error due to the inaccurate representation of the geometry through the segmentation, is kept constant, allows us to estimate whether
the increased errors are due to insufficient numerical accuracy or the
coarse segmentation. We find that both for a
segmentation resolution of 2 mm and 4 mm, the errors are dominated
by the geometry error. Comparing the models with a segmentation
resolution of 2 mm, we find nearly identical errors with regard to the
RDM up to an eccentricity of 0.964
(see right subfigure in Figure \ref{fig:radial-geom}).
Here, the median of the errors remains below 0.1. For higher eccentricities, where sources are
already placed in the outermost layer of elements that still belong to the
brain compartment, the errors for the lower resolved mesh clearly increase faster; the differences are especially large for the two highest eccentricities. With regard to the lnMAG, the effects of the higher mesh resolution are clearly weaker. Even for the outermost sources, notable differences can be seen only due to some outliers,
whereas the medians of the errors stay in a similar range for both mesh
resolutions. For the meshes with a segmentation resolution of 4 mm,
only negligible differences can be seen at all eccentricities;
the medians of the errors are very similar. Differences
can be found only in the maximal values but do not show a systematic
behavior. However, the errors are clearly increased compared to the
models with a higher segmentation resolution, i.e., a better approximation of the geometry. Already at an
eccentricity of about 0.5 the median of the RDM is at about 0.1, increasing to values
above 0.4 for the highest four eccentricities. The same behavior is
observed for the lnMAG, again finding significantly increased errors
compared to the models with a higher segmentation resolution.

In Figure \ref{fig:radial-conv}, the results for the newly proposed DG-FEM are presented side by side
to the CG-FEM for the models \segres{1}{1}, \segres{2}{2}, and
\segres{4}{4}. For the model \segres{1}{1}, the
only notable difference with regard to the RDM can be observed for the
highest eccentricity, where the DG-FEM achieves slightly higher accuracies;
the evaluation of the lnMAG shows even less differences. Also for model
\segres{2}{2}, the two approaches achieve a very similar numerical accuracy for the lower
eccentricities, with RDM errors clearly below 0.1; for eccentricities
between 0.964 and 0.991, the CG-FEM performs slightly better,
whereas for the highest eccentricity the DG-FEM achieves a higher
accuracy, again. However, as analyzed before, the main error source is
the inaccurate representation of the geometry through the segmentation. The lnMAG shows no systematic
difference in accuracy between the two methods in this model. In
the coarsest model, \segres{4}{4}, the DG-FEM performs clearly better
than the CG-FEM even for low eccentricities. Regardless of the
high geometry errors, as seen in Figure \ref{fig:radial-geom},
larger differences in numerical accuracy between the DG- and CG-FEM can be observed for both the RDM and lnMAG
up to an eccentricity of 0.964. For higher eccentricities, possible differences can be less
clearly distinguished due to the dominance of the geometry error and
the resulting generally increased error level.

The most significant accuracy differences between the DG- and CG-FEM
can be seen in Figure \ref{fig:radial-leak}, where we study the increase of errors
for decreasing skull thickness and the resulting increase in the number of skull leakages
(see Table \ref{tab:leaks}). We still find a very similar numerical accuracy for the DG- and CG-FEM in the
leakage-free model \segresR{2}{2}{84} (4 mm skull thickness), as one would expect
given the previous results, but the DG-FEM performs clearly better in the
leaky models \segresR{2}{2}{82}  (2 mm skull thickness) and \segresR{2}{2}{83} (3 mm skull thickness).
Even for low eccentricities, the sensitivity of the CG-FEM to leakages is distinct.
The DG-FEM achieves an only slightly decreased accuracy in the model
\segresR{2}{2}{83} compared to \segresR{2}{2}{84}, which is a first sign that
this approach is clearly less sensitive to leakages. In contrast, the errors of the
CG-FEM for model \segresR{2}{2}{83} are much higher than for model \segresR{2}{2}{84}
(compare \segresR{2}{2}{83} \textit{CG-FEM} with \segresR{2}{2}{84} \textit{CG-FEM}) and already in the range of those of the DG-FEM in the
very leaky model \segresR{2}{2}{82} (compare \segresR{2}{2}{83} \textit{CG-FEM} with \segresR{2}{2}{82} \textit{DG-FEM}). Overall,
we find that the DG-FEM achieves a significantly higher
numerical accuracy than the CG-FEM already for low eccentricities in the leaky models, both with regard to the RDM and lnMAG.

\begin{figure*}[!t]%
\centering
\setlength{\tabcolsep}{.005\textwidth}
\begin{tabular}{ccccl}
& \textit{Geometry} & \textit{CG-FEM} & \textit{DG-FEM} & \\
\raisebox{-0.5\height}{\rotatebox{90}{\textit{\footnotesize \segresR{2}{2}{82}}}} &
\raisebox{-0.5\height}{\includegraphics[width=.25\textwidth]{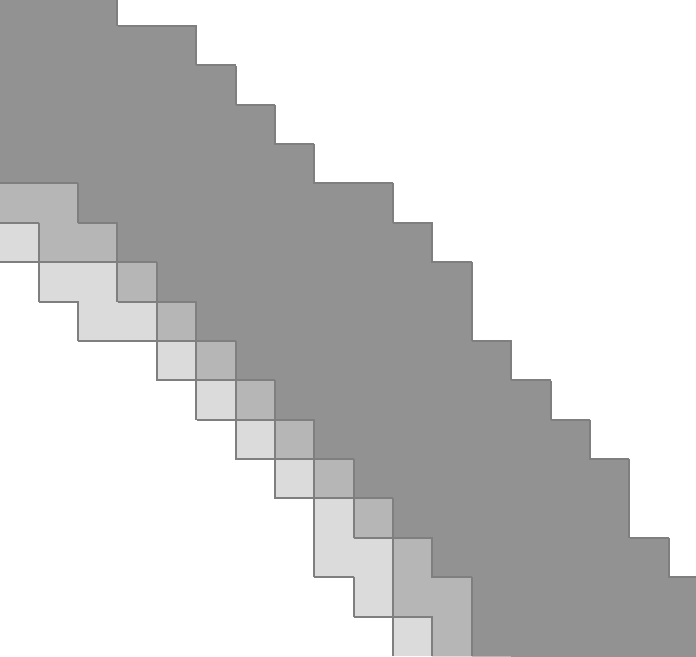}} &
\raisebox{-0.5\height}{\includegraphics[width=.25\textwidth]{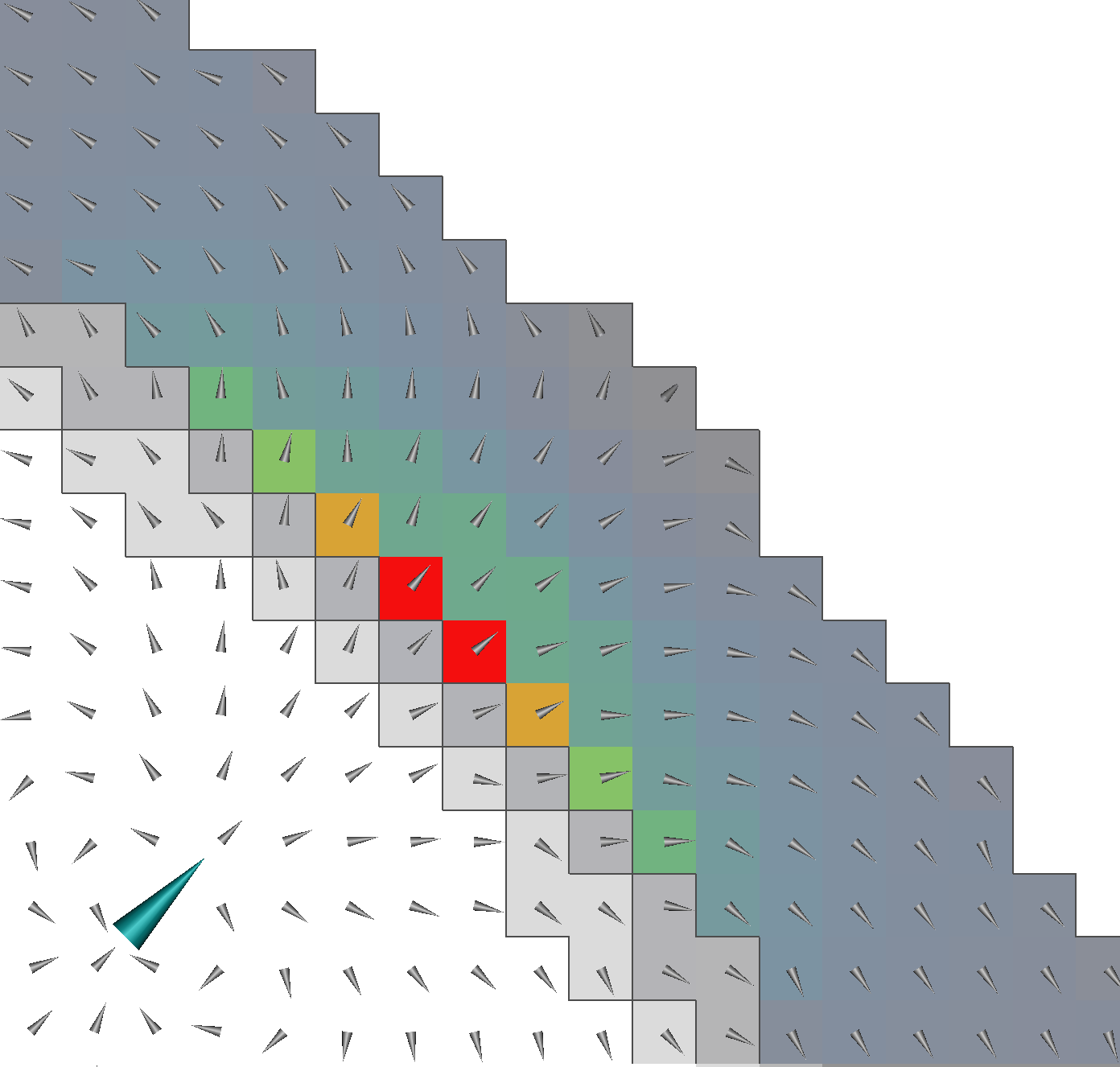}} &
\raisebox{-0.5\height}{\includegraphics[width=.25\textwidth]{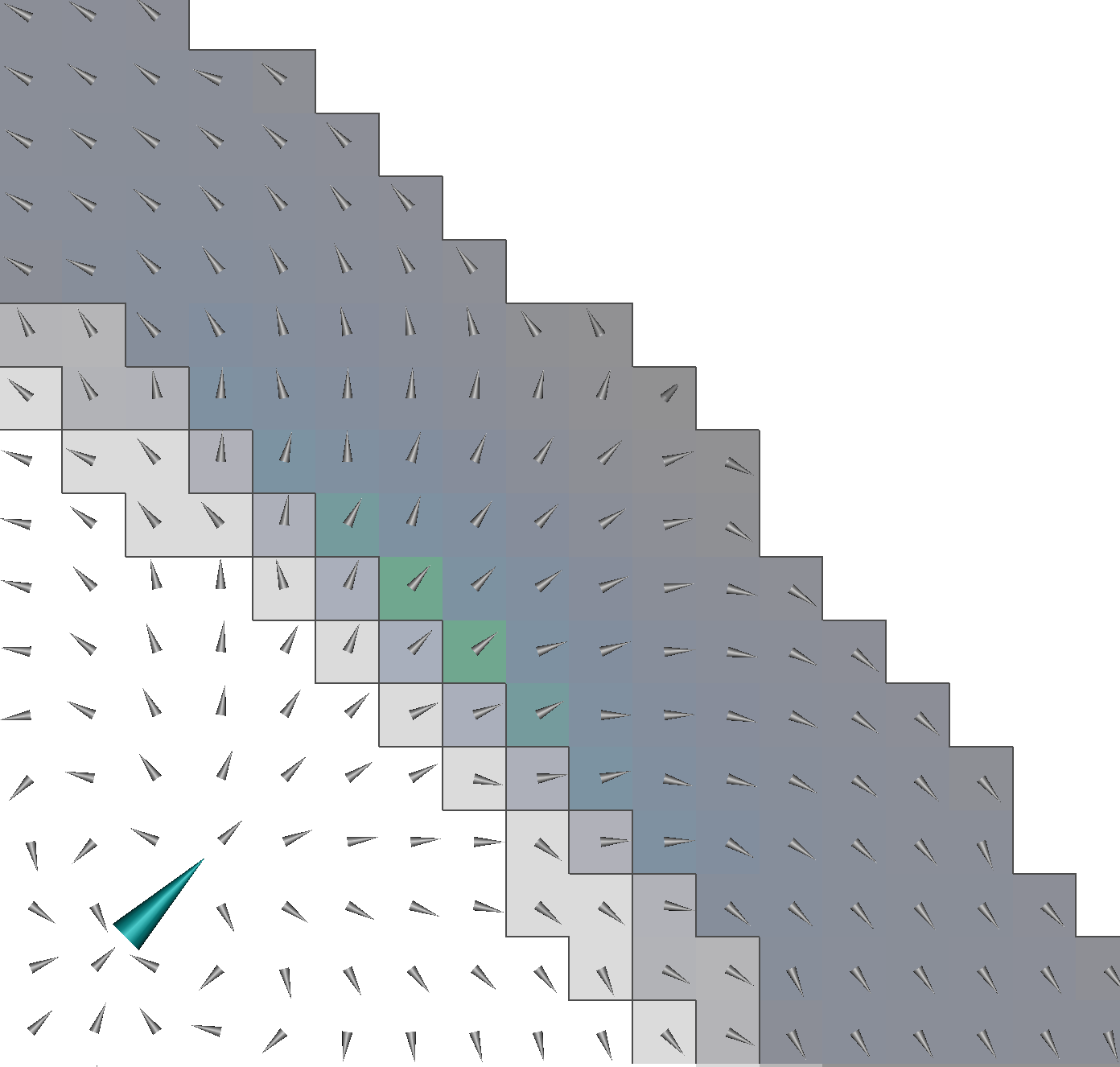}} &
\raisebox{-0.5\height}{\includegraphics[width=.05\textwidth]{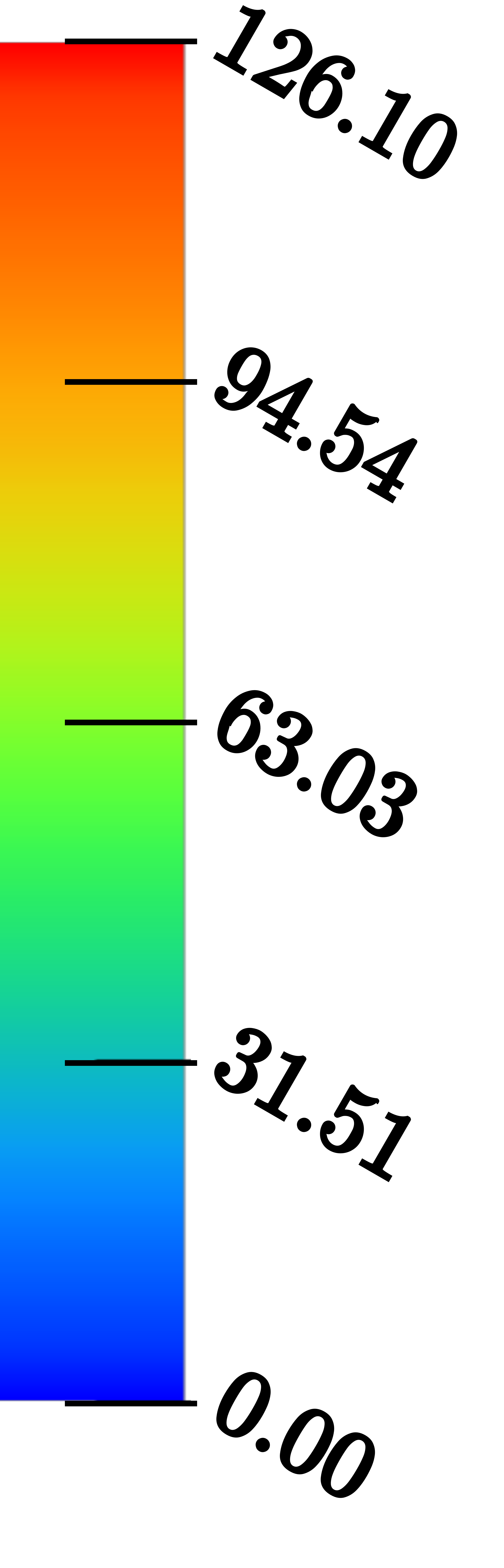}}\\
\raisebox{-0.5\height}{\rotatebox{90}{\textit{\footnotesize \segresR{2}{2}{83}}}} &
\raisebox{-0.5\height}{\includegraphics[width=.25\textwidth]{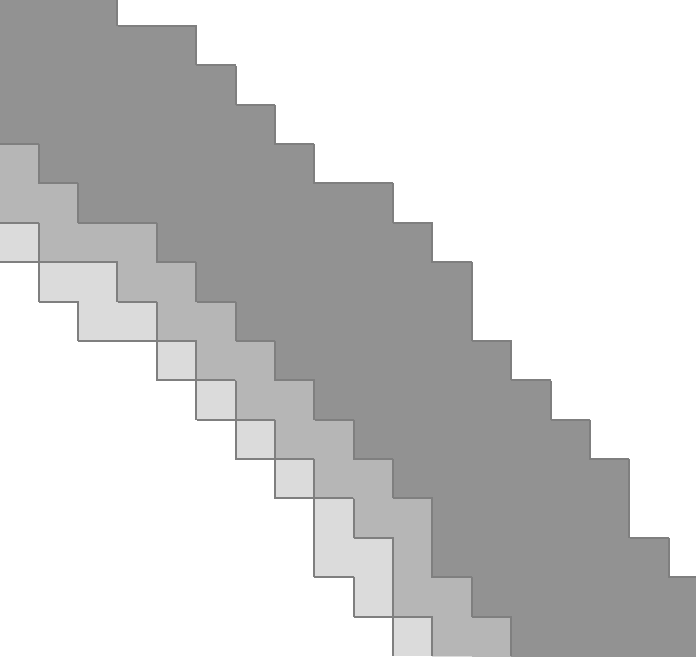}} &
\raisebox{-0.5\height}{\includegraphics[width=.25\textwidth]{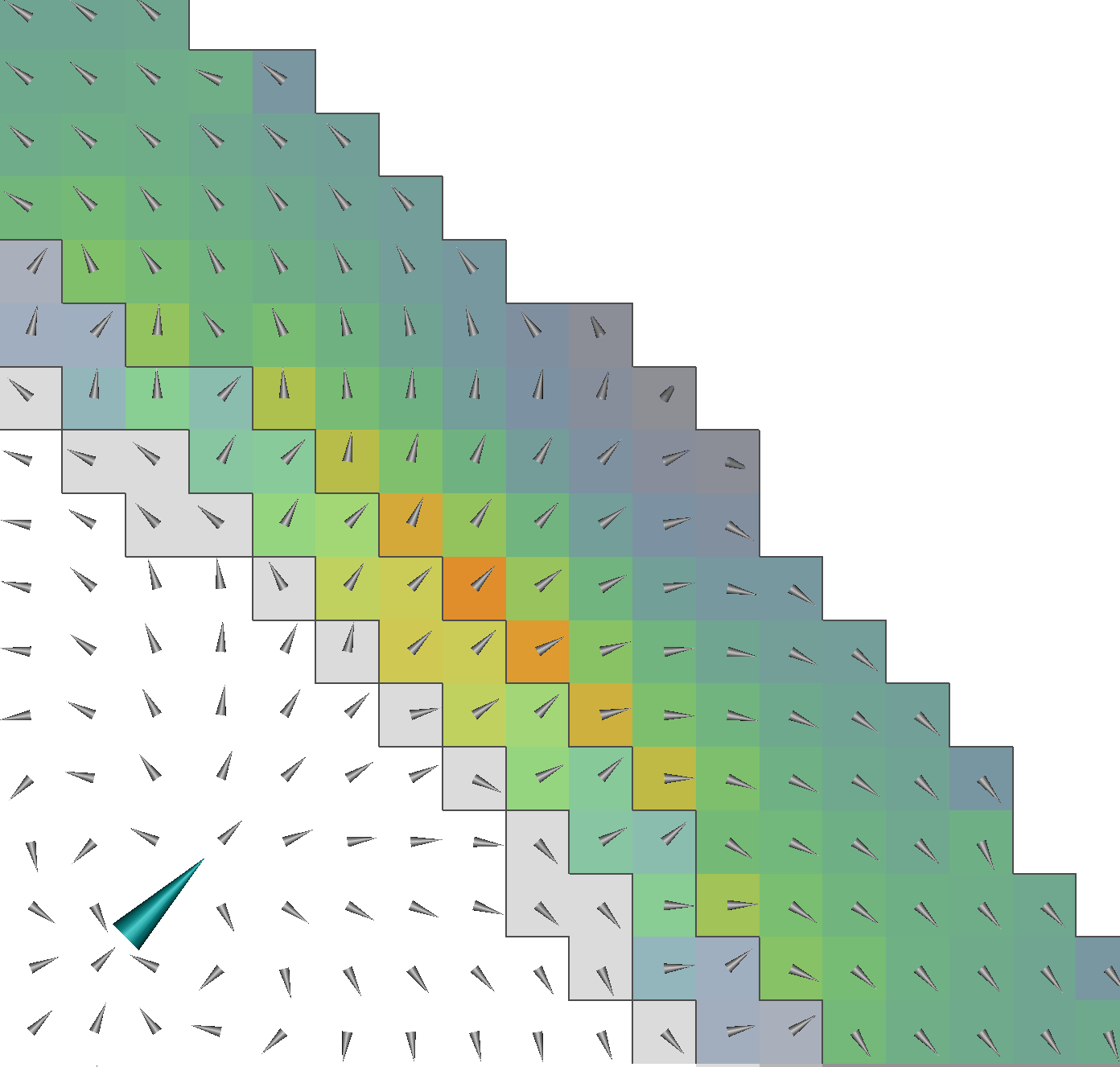}} &
\raisebox{-0.5\height}{\includegraphics[width=.25\textwidth]{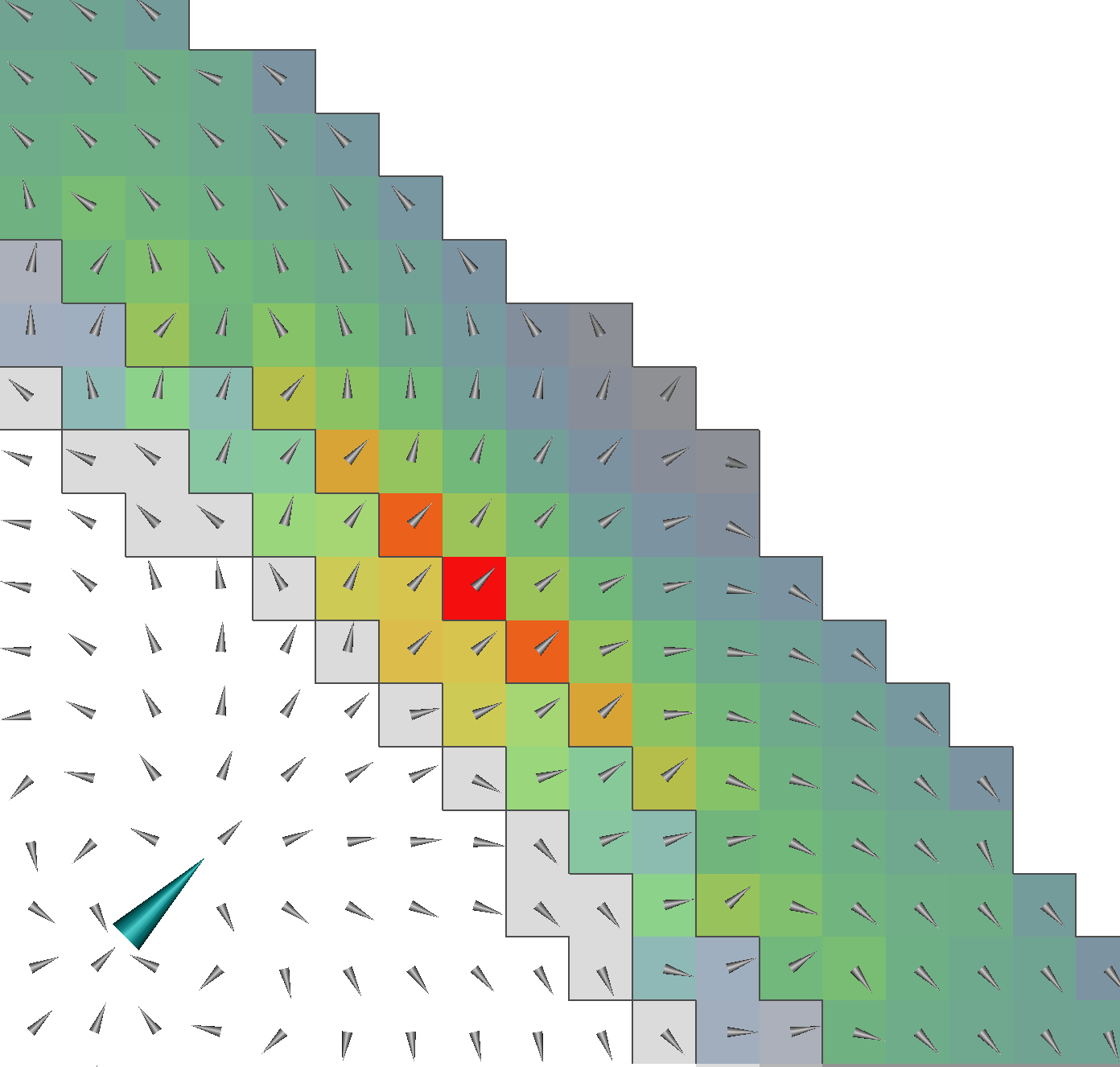}} &
\raisebox{-0.5\height}{\includegraphics[width=.05\textwidth]{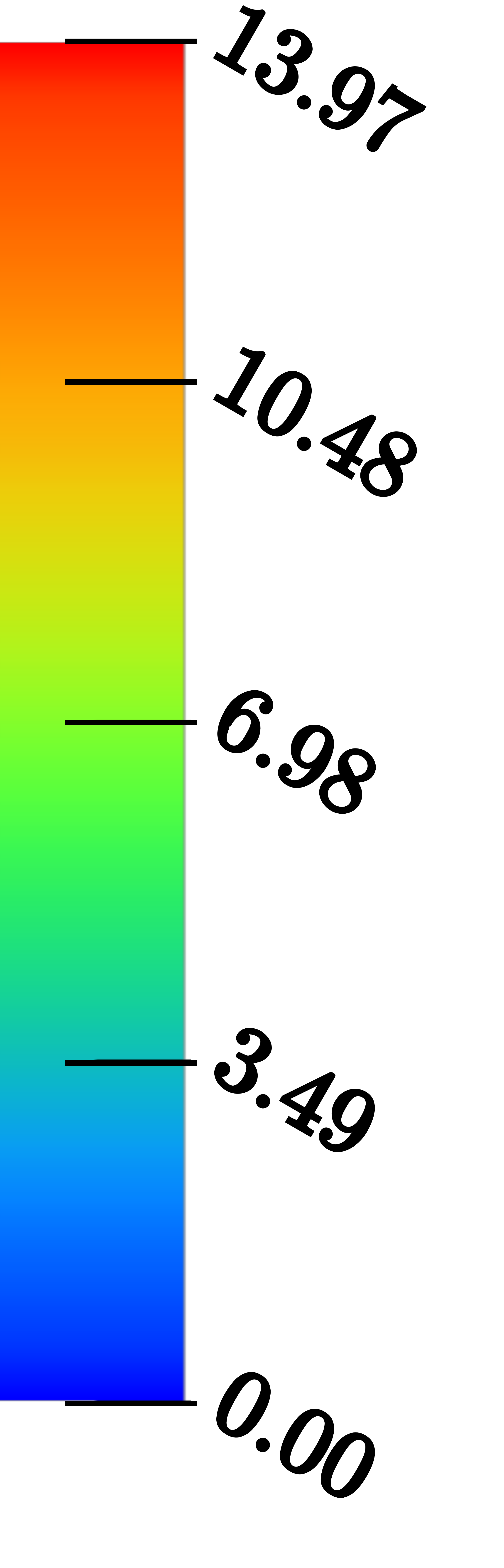}}\\
\raisebox{-0.5\height}{\rotatebox{90}{\textit{\footnotesize \segresR{2}{2}{84}}}} &
\raisebox{-0.5\height}{\includegraphics[width=.25\textwidth]{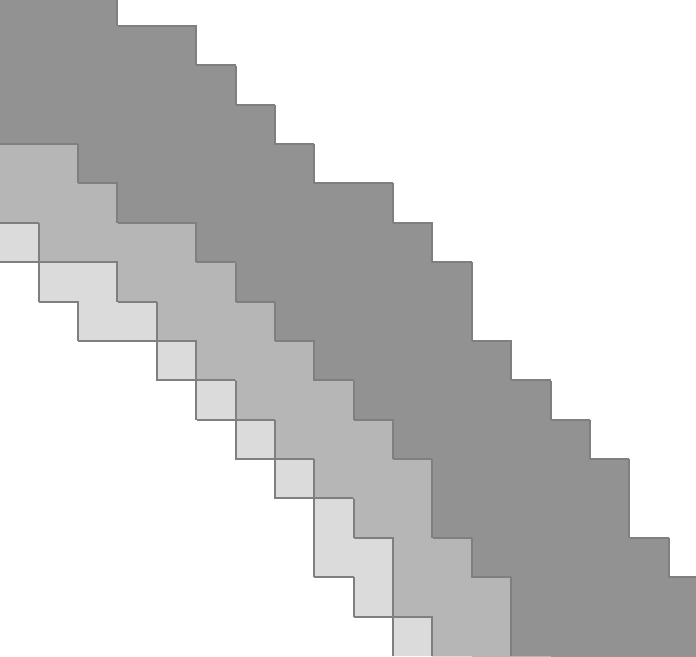}} &
\raisebox{-0.5\height}{\includegraphics[width=.25\textwidth]{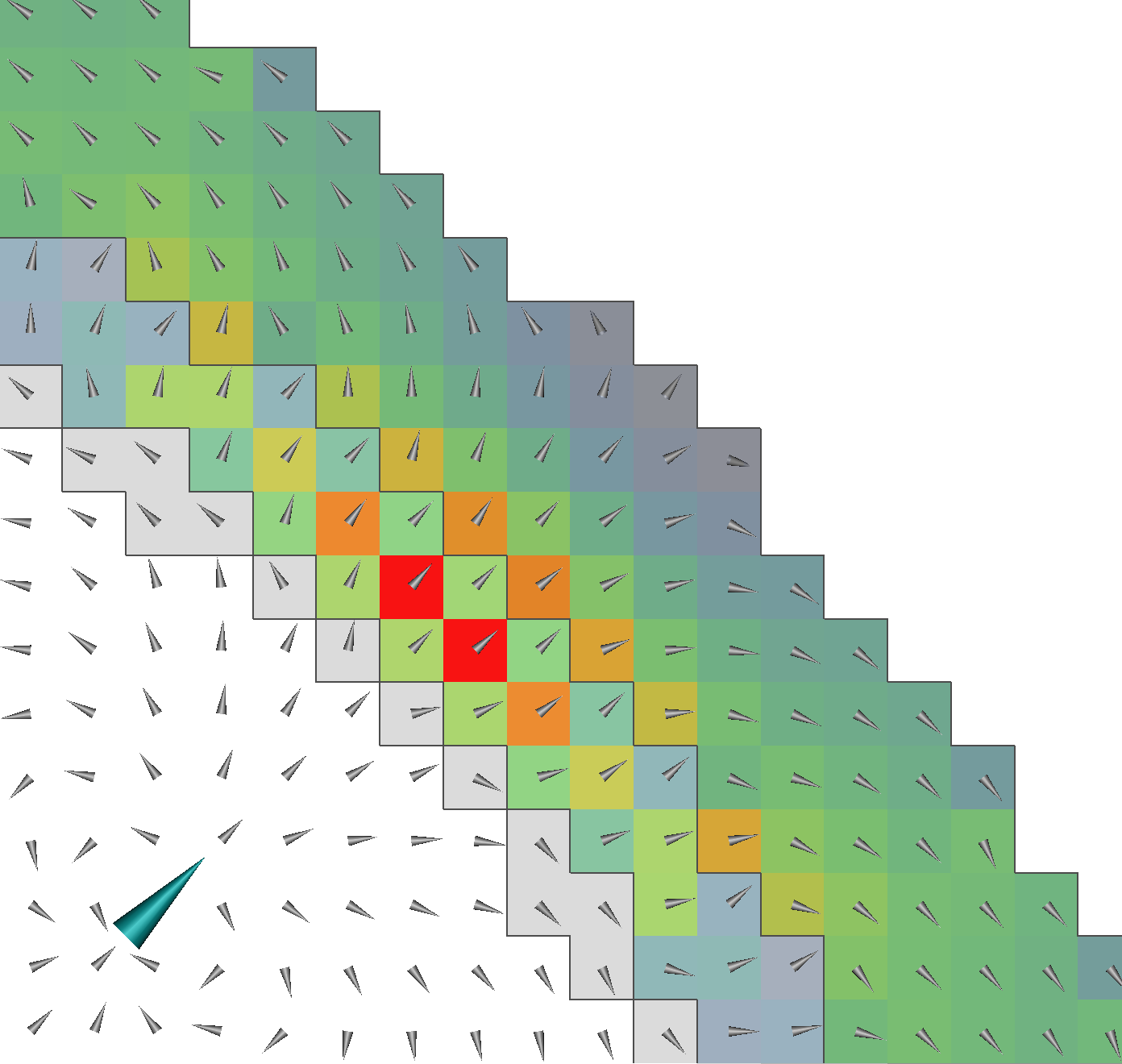}} &
\raisebox{-0.5\height}{\includegraphics[width=.25\textwidth]{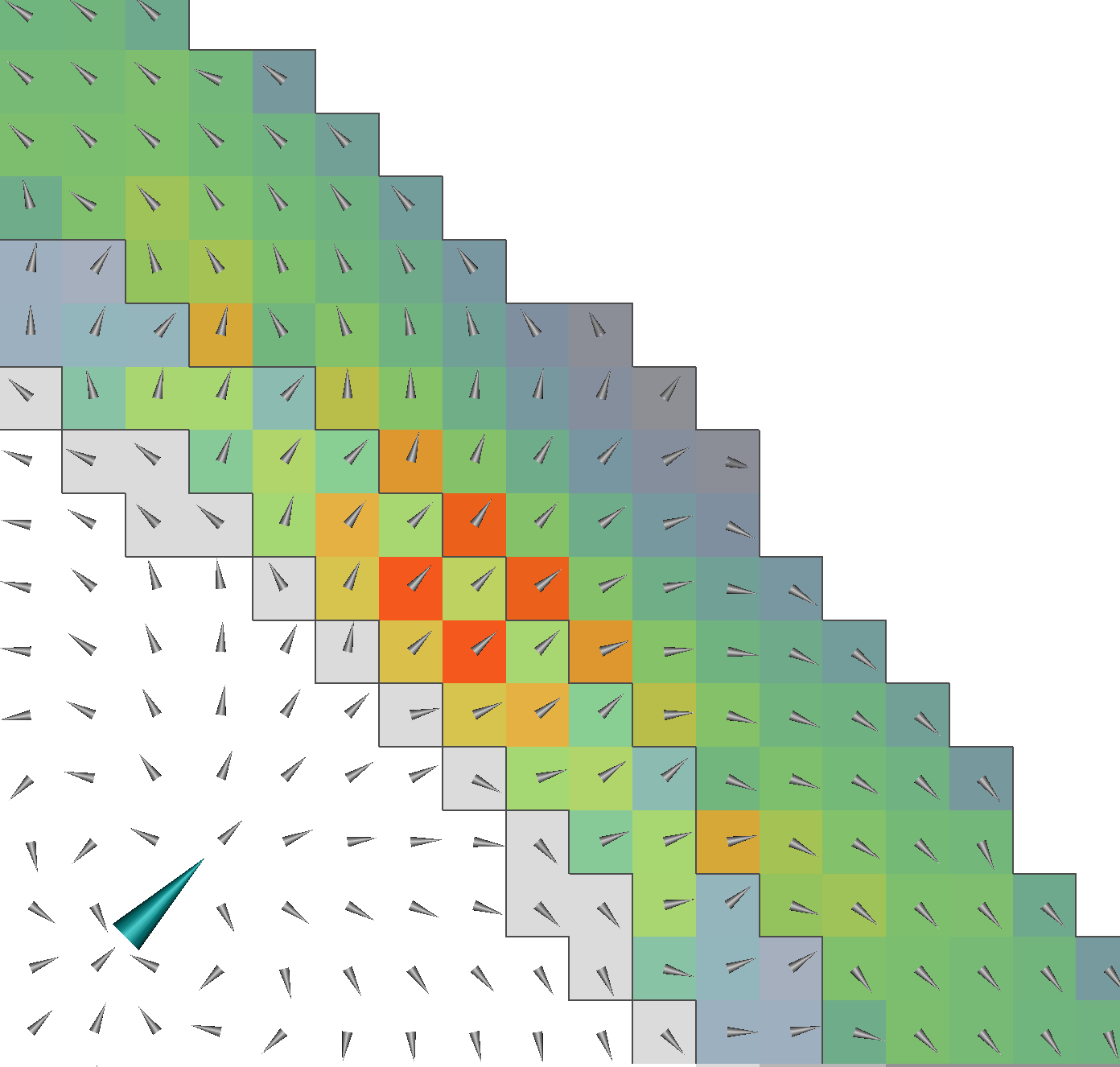}} &
\raisebox{-0.5\height}{\includegraphics[width=.05\textwidth]{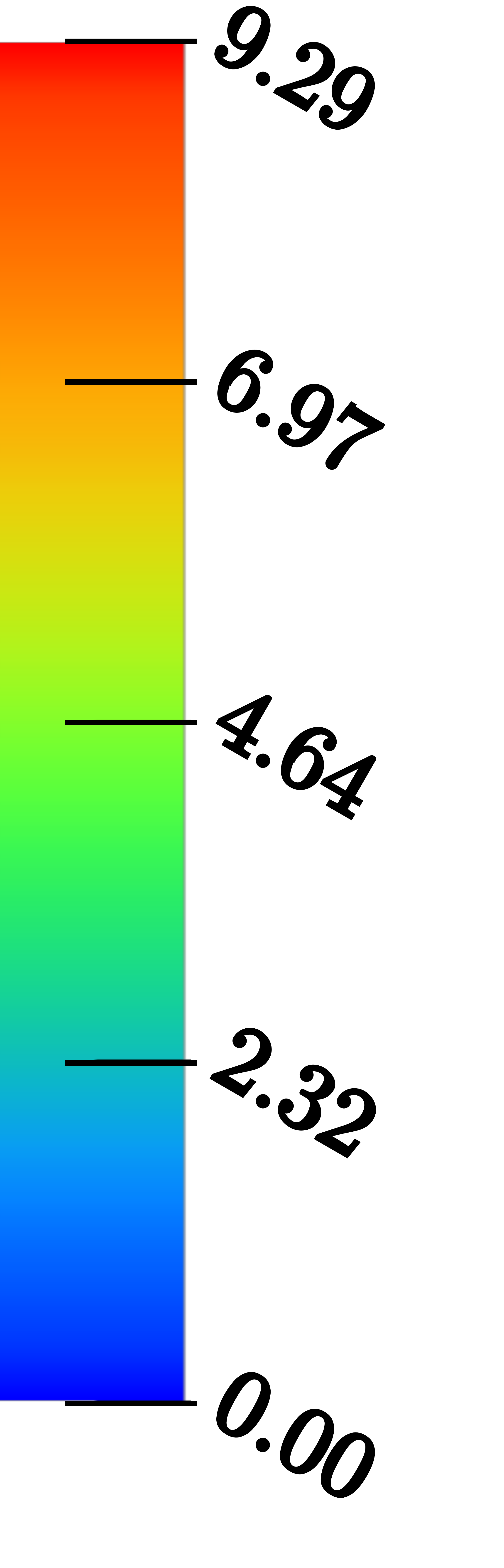}}
\end{tabular}
\caption{Visualization of model geometry (left column), current direction and strength for the CG-FEM (middle column) and DG-FEM (right column) for models \segresR{2}{2}{82} (top row), \segresR{2}{2}{83}  (middle row), and \segresR{2}{2}{84}  (bottom row). The left column shows the model geometry, interior to exterior from bottom left to top right, brain in white, CSF, skull and skin in increasingly dark gray, and air in white. Dark gray lines mark compartment boundaries. In the middle and right columns, the large turquoise cone presents the dipole source. The small and normalized gray cones show the directions of the current flow and, for elements belonging to skull and skin compartments, the coloring indicates the current strength. For each model, the color scale is kept constant for both approaches.}%
\label{fig:vis-current}%
\end{figure*}

\begin{figure*}[!t]%
\centering
\setlength{\tabcolsep}{.005\textwidth}
\begin{tabular}{cccl}
\segresR{2}{2}{82} & \segresR{2}{2}{83} & \segresR{2}{2}{84} & \\
\includegraphics[width=.3\textwidth]{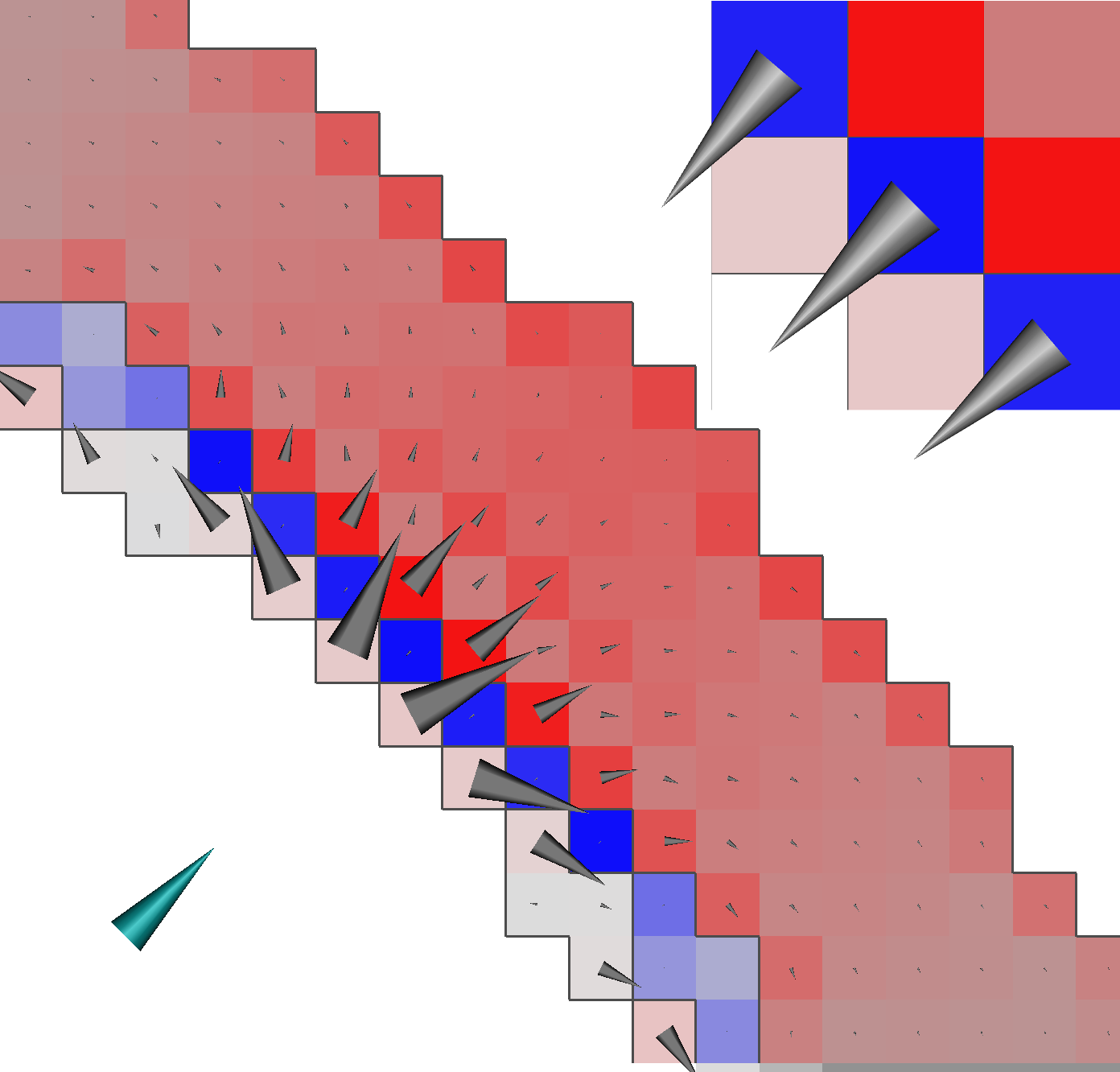} & \includegraphics[width=.3\textwidth]{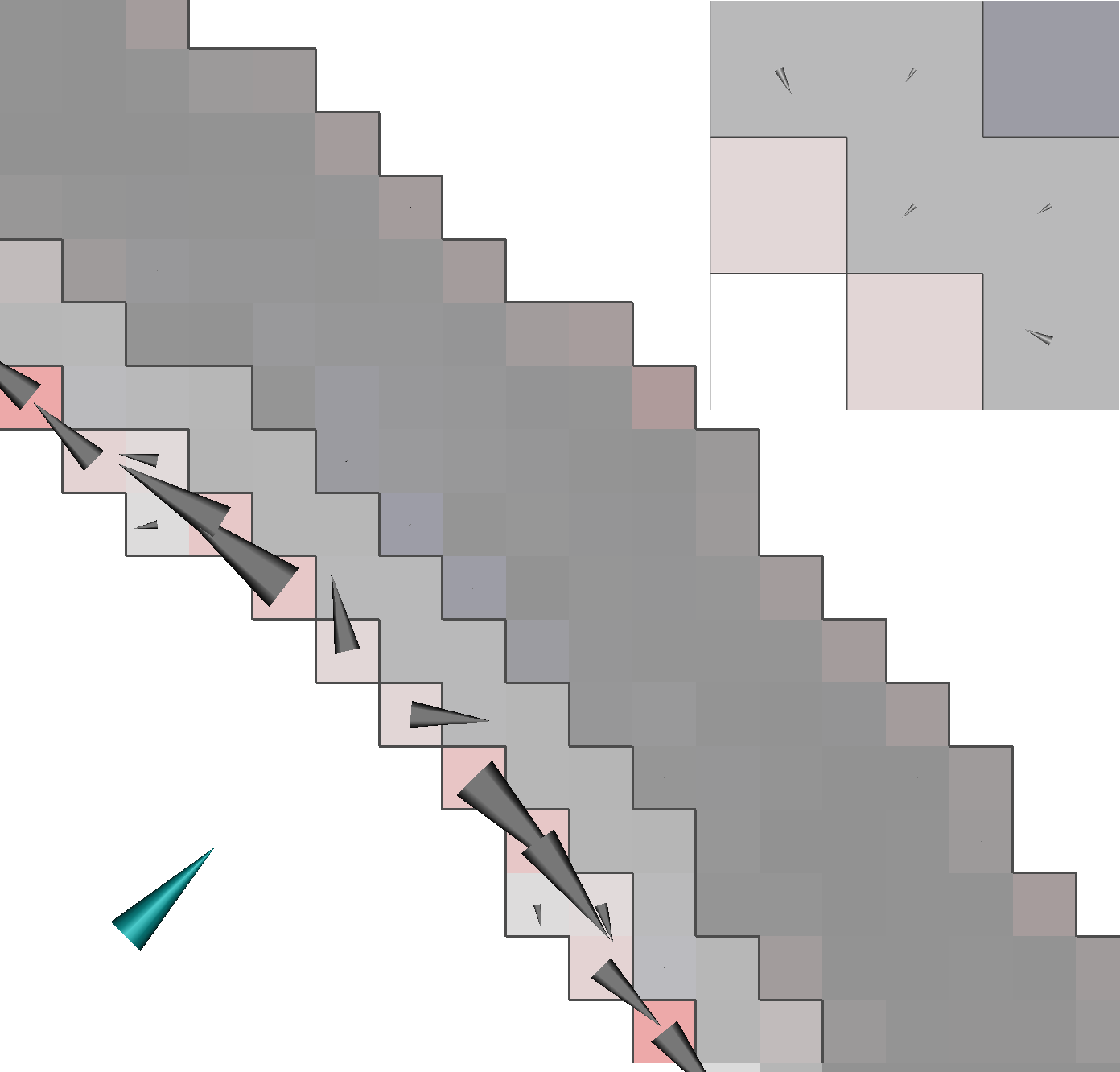} & \includegraphics[width=.3\textwidth]{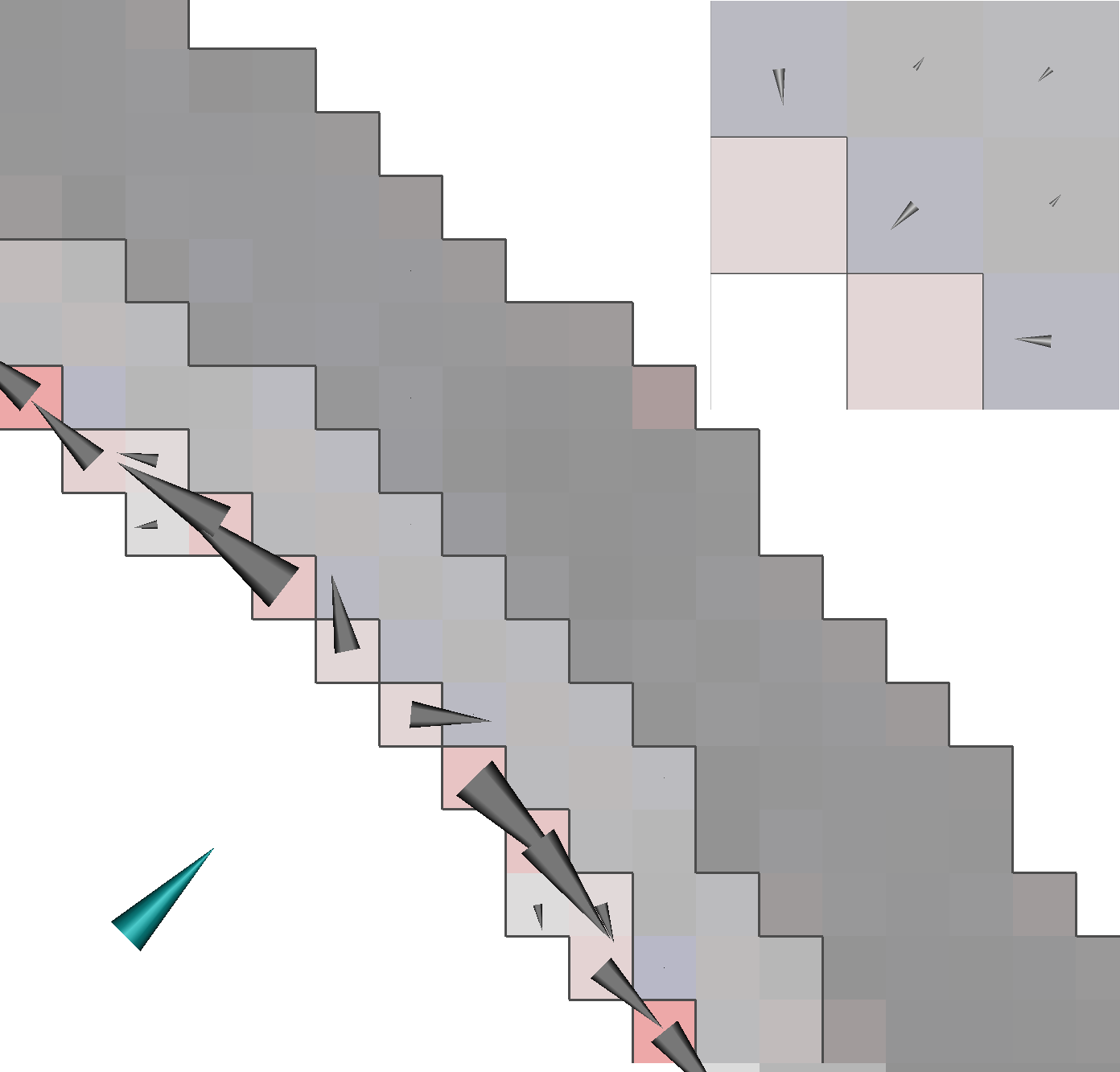} & \raisebox{0.25\height}{\includegraphics[width=.06\textwidth]{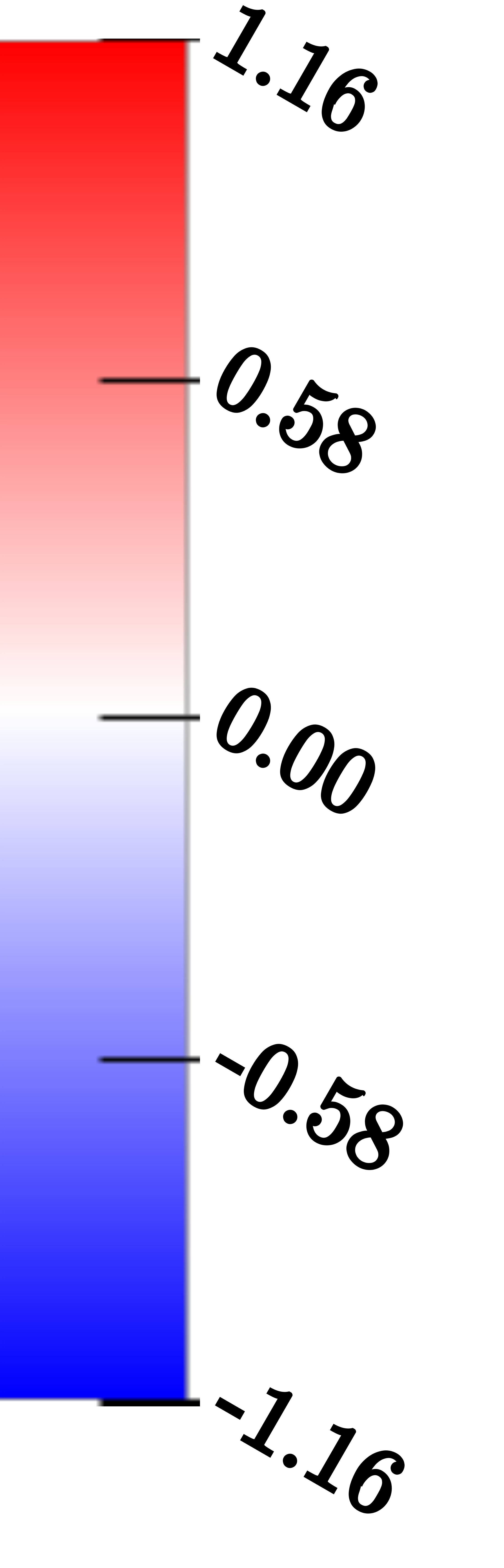}}
\end{tabular}
\caption{Visualization of current flow differences between the CG- and DG-FEM in models \segresR{2}{2}{82} (left), \segresR{2}{2}{83} (middle), and \segresR{2}{2}{84} (right). The turquoise cone presents the dipole source. The coloring shows the \lnMAGj (increase/decrease of the current strength simulated with the CG- compared to the DG-FEM solution).
For all models the maximum of the color scale is chosen as the maximal value in the skin and skull compartment. Gray cones, having the same linear scaling for all models, show the \totDIFFj (difference in current flow). In models \segresR{2}{2}{83} and \segresR{2}{2}{84}, the arrows in skin and skull are not visible due to the relatively small values. Dark gray
  lines mark compartment boundaries. In the top right corner of each subfigure, a detail of the skull elements to which the dipole is pointing is shown. The size of the cones is magnified by a factor of 20 compared to the full image, and only the cones for the skull compartment are visualized.}%
\label{fig:vis-leak}%
\end{figure*}

To illustrate the effect of skull leakages, we generated the
visualizations shown in Figures \ref{fig:vis-current} and
\ref{fig:vis-leak}. In Figure \ref{fig:vis-current}, the electric current direction
and strength for a radial dipole with fixed position and orientation (turquoise cone in the middle and right columns) in the
models \segresR{2}{2}{82} (top row), \segresR{2}{2}{83} (middle row), and
\segresR{2}{2}{84} (bottom row) and with the two numerical approaches CG-FEM (middle column)
and DG-FEM (right column) are visualized. When using the CG-FEM in the model with the thinnest (2 mm) skull compartment,
\segresR{2}{2}{82}, we find extremely strong currents in the innermost
layer of skin elements, i.e., at the interface to the skull.
This effect is especially distinct for the elements to which the dipole is nearly directly pointing. In comparison,
the current strengths found in the skull
compartment are negligible, which is a clear sign for a current leakage
through the vertices shared between the CSF and skin compartment, bypassing the thin
and leaky skull compartment. For the DG approach, these extreme peaks are not
found, and the maximal current strength amounts to only about 30\% of that
of the CG approach. In the other two models (note the much lower scaling in the
middle and lower rows in Figure \ref{fig:vis-current}), we find a clear decrease
of the current strength in the skin compartment compared to the
\segresR{2}{2}{82} model. In these two models and with the given source scenario, none of the approaches seems
to be obviously affected by skull leakage.
However, in model \segresR{2}{2}{83} (middle row), the DG approach shows about 20\% higher peak
currents in the innermost layer of skin elements compared to the CG-FEM. In model \segresR{2}{2}{84} (bottom row), the maximal current strength for the CG-FEM is only about 7\% higher
than for the DG-FEM. The maximal current strength is found in the skull compartment in this model, indicating that no leakage effects occur. If leakage effects would occur, the maximal current strength would be expected to be found in the skin compartment as in the other models. These deviations seem reasonable
considering the relatively coarse resolution of the segmentation, and especially considering the low skull thickness.
The visualizations show that the interplay between source position and direction and
the local mesh geometry strongly influences the local current flow in these
models, leading to current peaks in some elements while neighboring
elements show relatively low currents, as is clearly visible in
model \segresR{2}{2}{84}. In this model, we find strong currents in the two
skull elements connecting the CSF and skin to which the dipole is pointing (see also outward pointing arrows in these elements in the detail in Figure \ref{fig:vis-leak}, right). Locally, these constitute the ``path of least resistance'' between the CSF and skin compartment.

In Figure \ref{fig:vis-leak}, the two measures \lnMAGj and \totDIFFj are visualized to show the differences between the two
methods even more clearly.
As Figure \ref{fig:vis-current} suggests, we find for model \segresR{2}{2}{82} that for the CG-FEM, the current
strength is clearly higher  than for the DG-FEM in those elements of the innermost
layer of the skin compartment that share a vertex with the CSF
compartment, indicated by the high \lnMAGj (red coloring). The visualization of the \totDIFFj (gray arrows) clearly shows that the leakage
generates a strong current from the CSF compartment directly into
the skin compartment that does not exist for the DG-FEM.
At the same time, the \lnMAGj indicates that the current strength in the skull compartment is decreased in the CG-FEM (blue coloring); the detail of the skull elements for model \segresR{2}{2}{82} in Figure \ref{fig:vis-current}, left, actually shows that there is a stronger current
through the skull elements in the DG-FEM than in the CG-FEM simulation (inwards pointing arrows).
We also find high values for the \totDIFFj
in the CSF compartment that are most probably caused
by effects similar to the ``leakage'' effects, i.e., a mixing of
conductivities in boundary elements/vertices.  However, in model \segresR{2}{2}{82}, the color-coding for the \lnMAGj
shows that this is not related to significant relative differences in current strength.
Here, the strongest values for the \lnMAGj are found in the skin
and skull compartment. In turn, for the other two models we find the largest deviations in the CSF compartment, both with regard
to the \totDIFFj and \lnMAGj.  For model \segresR{2}{2}{83},
we furthermore find minor effects with regard to the \lnMAGj, i.e., relative differences of current
strength, in the innermost layer of skin elements, which are also the elements with the highest
absolute current strength among the skin and skull compartment (see also Figure \ref{fig:vis-current}). We also
find slightly increased values for the \lnMAGj in the outermost layer of the skin elements. These
might be artifacts due to the ``staircase''-like geometry of the outer surface
in the regular hexahedral model. However, the \totDIFFj
in the skin and skull is negligible compared to the CSF compartment, and also clearly lower than in model \segresR{2}{2}{82}.
The same holds true for model
\segresR{2}{2}{84}, where the \lnMAGj is
  slightly increased in the skull
and skin compartment, mainly in elements with a small absolute current
strength, as a comparison to Figure \ref{fig:vis-current} shows.
Still, relatively high differences in the \lnMAGj and \totDIFFj are visible in the CSF compartment.
These results indicate that the models \segresR{2}{2}{83} and
\segresR{2}{2}{84} are less affected by skull leakage; the differences are due rather to the different
computational approaches and do not show obvious errors due to the underlying segmentation.

\begin{figure*}[!t]%
\centering
\includegraphics[width=.48\textwidth]{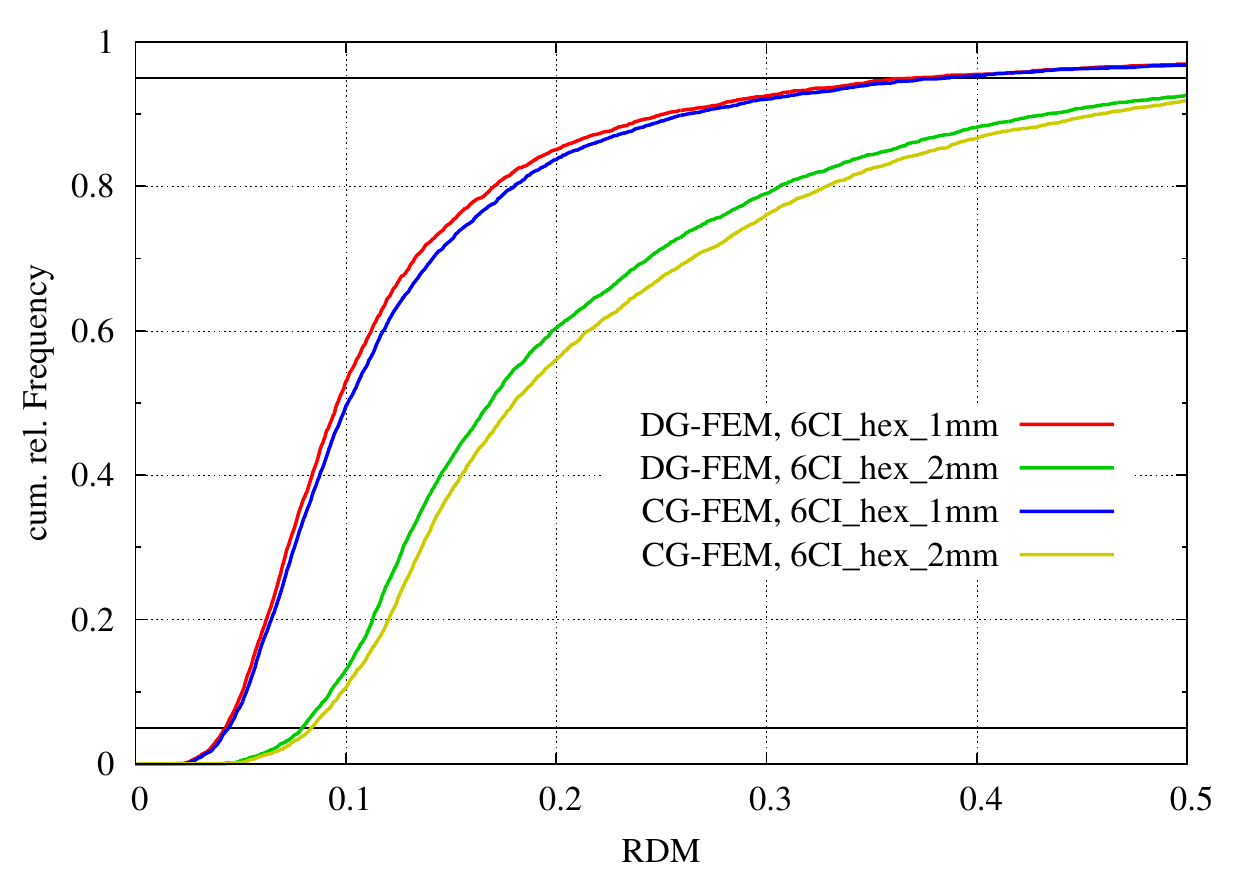} \hfill \includegraphics[width=.48\textwidth]{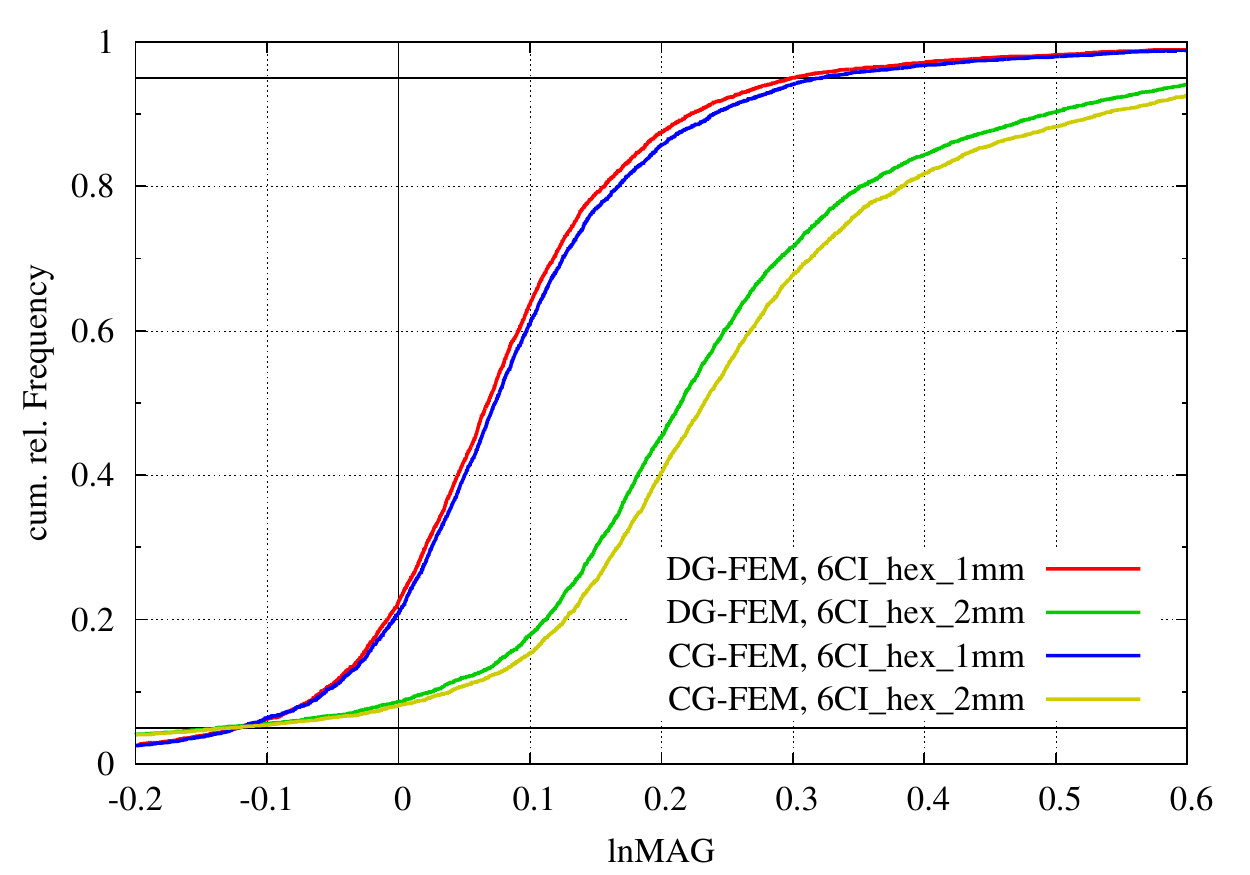}
\caption{Cumulative relative errors of the RDM (left) and lnMAG (right) in realistic six-layer hexahedral head models with 1 and 2 mm mesh resolution in reference to high-resolution tetrahedral model..}%
\label{fig:realistic}%
\end{figure*}

For the realistic head models, \ci{6}{hex}{1mm} and \ci{6}{hex}{2mm}, the RDM and lnMAG in reference to model \ci{6}{tet}{hr} were computed. The cumulative relative frequencies of the RDM and lnMAG are shown in Figure \ref{fig:realistic}. At each position on the x-axis, the corresponding y-value indicates the fraction of sources that have an RDM/lnMAG lower than this value. Accordingly, the rise of the curve should be as steep as possible for both the RDM and lnMAG and furthermore as close as possible to the x=0 line for the lnMAG.

Overall, the results for both the RDM and lnMAG show relatively high errors. This is a consequence of the rather bad approximation of the geometry that is achieved when using regular hexahedra compared to the accuracy that can be achieved using a surface-based tetrahedral model. While the differences between DG- and CG-FEM observed for model \ci{6}{hex}{1mm} are rather subtle, the differences are clear for model \ci{6}{hex}{2mm}. These results mainly underline the observations in the sphere studies.

In model \ci{6}{hex}{1mm}, for both approaches about 50\% of the sources show RDM errors below 0.1 and 95\% of the errors lie below 0.35. The rise of the curve for the DG-FEM is slightly steeper than for the CG-FEM, indicating a higher numerical accuracy, but from an RDM of about 0.3 onwards both curves are nearly overlapping. The RDM errors for the DG- and CG-FEM are clearly increased for the lower mesh resolution. In this model, the difference between the DG- and CG-FEM is also more distinct, e.g., for the DG-FEM more than 60\% of the sources have an RDM below 0.2, but this is only the case for about 56\% of the sources when using the CG-FEM.

The results for the lnMAG are in accordance with those obtained for the RDM. Again, the DG-FEM performs only slightly better than the CG-FEM in model\linebreak \ci{6}{hex}{1mm}, whereas the differences in model \ci{6}{hex}{2mm} are more distinct.

Compared to the results in the sphere models, the differences even in model \ci{6}{hex}{2mm} seem to be rather small. However, it has to be taken into account that the leakages in this model are nearly all located in temporal regions, so that only a fraction of the sources is affected.

\begin{table}[!t]
\renewcommand{\arraystretch}{1.3}
\caption{Computational effort (from left to right): number of unknowns
  (DOFs), solving of a single equation system (${t_\text{solve}}$), overall computation time of transfer matrix (${t_\text{tansfer}}$), setup time of a single right-hand side (${t_\text{rhs}}$), overall computation time of leadfield matrix (${t_\text{lf}}$, 4,724 sources) and the total computation time (${t_\text{total}}$).}
\label{tab:computation}
\centering
\small
\begin{tabular}{lrrrrrr}
\hline
                     & DOFs       & $t_\text{solve}$ & {$t_\text{transfer}$} & {$t_\text{rhs}$} & {$t_\text{lf}$} & {$t_\text{total}$}\\
\hline \hline
\ci{6}{hex}{1mm}, DG & 30,968,232 & 1,468~s    & 115,994~s       & 71~s       & 336,118~s & 452,112~s \\
\ci{6}{hex}{2mm}, DG & 3,876,256  & 136~s      & 10,750~s        & 8.8~s      & 41,939~s  & 52,689~s  \\
\ci{6}{hex}{1mm}, CG & 3,965,968  & 185~s      & 14,634~s        & 68~s       & 321,705~s & 336,339~s \\
\ci{6}{hex}{2mm}, CG & 508,412    & 20~s       & 1,588~s         & 8.7~s      & 41,222~s  & 42,810~s  \\
\hline
\end{tabular}
\end{table}

The computation times for the DG- and CG-FEM in models \ci{6}{hex}{1mm} and \ci{6}{hex}{2mm} are shown in Table \ref{tab:computation}. All computation times are single CPU wall-clock times without exploitation of parallelization or vectoring. The solving time for a single equation system $t_\text{solve}$ grows approximately linear with the number of degrees of freedom. This result corresponds to the theoretically predicted optimal scaling \cite{Braess}. Accordingly, the setup times for the transfer matrices, $t_\text{transfer}$, are clearly higher for the DG-FEM than for the CG-FEM.

In contrast to the solving times, the setup times for a single right-hand side, $t_\text{rhs}$, differ only slightly between the CG- and DG-FEM, being below 10 s for model \ci{6}{hex}{2mm} and around 70 s for model \ci{6}{hex}{1mm}. For the here used source model and the given number of sources, the overall computation time is clearly dominated by the computation of the right-hand sides. This part of the computation takes twice as long as the setup of the transfer matrix even for the DG-FEM and model \ci{6}{hex}{1mm}.

\section{Discussion}
In this paper we presented the theoretical derivation
of the subtraction FE approach for EEG forward
simulations in the framework of discontinuous Galerkin methods. The
scheme is consistent and fulfills a discrete conservation
property. Existence and uniqueness follow
from the coercivity of the bilinear form.

Numerical experiments in sphere models showed the convergence of the DG solution
toward the analytical solution with increasing mesh resolution and
better approximation of the spherical geometry with increasing segmentation resolution. We also showed
that the numerical accuracy of the DG-FEM is dominated by the
geometry error, whereas the actual mesh resolution in a model with a bad geometry approximation due to coarse segmentation resolution had only a minor
influence on the numerical results (Figure \ref{fig:radial-geom}). The
inaccurate representation of the geometry, especially for coarse mesh resolutions, is
visible by the ``staircase-like'' boundaries in Figure \ref{fig:vis-mesh}.

In the comparisons of DG- and the commonly used CG-FEM, we did not find remarkable
differences for models with higher mesh resolutions (1 mm, 2 mm), as
the results in Figure \ref{fig:radial-conv} are in the same range for
both approaches in the models \segres{1}{1} and \segres{2}{2}. In
this set of experiments, three main error sources can be identified: geometry errors, numerical
inaccuracies, and leakage effects.

First, there is the error in the representation of the
geometry as a consequence of approximating the spherical models by voxel segmentations of different resolutions, which is increasing with coarser segmentation resolutions; see also Figure \ref{fig:radial-geom}.
We thus strongly recommend the use of segmentation resolutions, and thereby necessarily MRI resolutions, as high as practically feasible, possibly even locally refined
when zoomed MRI technology is available. In fact, a newly developed zoom technique for MRI has become available for practical use, based on a combination of parallel transmission of excitation pulses and localized excitation \cite{Blasche2012zoomedMRI}.
A first usage of this zoom technique can be found in \mbox{\cite[Chapter 5]{aydin2016,AydinDiss2015}}.
Moreover, in future work, based on \cite{engwer2009:udg}, we plan to further develop a cut-cell
approach that allows for an accurate representation of the geometry
while introducing only a negligible amount of
additional degrees of freedom. Thus, the achieved accuracy can be
increased while the computational effort is hardly affected (see first results in\mbox{\cite{nuessing2016}}).

Second, we have the numerical inaccuracy due to the discretization of
Equation \eqref{eq:forward-strong} in combination with
the strong singularity introduced by the assumption of a point dipole, which
is the main cause for the numerical inaccuracies of the subtraction approach for highest eccentricities, where
the source positions are very close to the next conductivity jump (cf. Figure \ref{fig:radial-conv}).
A rationale for this effect has been given in \cite{CWolt1,FDrec1}.
In future work, we are therefore planning
to adapt other source modeling approaches such as the Venant \cite{CHW:Tou65,CHW:Sch94,HBuch1997,CHW:Wol2007b,JVorw2012},
the partial integration \cite{YYan1991,CHW:Wei2000,CHW:Wol2007b,SVall2010,JVorw2012}, or the Whitney
approach\mbox{\cite{CHW:Tan2005,CHW:Pur2011,pursi2016}} to the DG-FEM framework. Until now, these have been
formulated and evaluated only for the CG-FEM. Compared to the subtraction approach, these approaches have the further advantage of a
strongly decreased computational effort for the setup of the right-hand-side vector \cite{CHW:Wol2007b,JVorw2012}.

The third source of error, the ``leakage effects'', explains the large differences in
numerical accuracy between the CG- and DG-FEM that can be observed in
model \segres{4}{4}. Due to the coarse resolution of the segmentation in comparison to
the thickness of the skull compartment (4 mm segmentation resolution, 6 mm
skull thickness), this model can already be considered as (at least partly)
leaky.

This observation motivated the further evaluation of the two methods in sphere models with a thin skull compartment, where the assumed advantages of the DG-FEM should have a bigger effect. Therefore, we constructed spherical models with a
thinner skull layer, assuming a skull thickness of 2 - 4 mm. The model with the minimal skull thickness of 2 mm, \segresR{2}{2}{82}, has
a skull layer as thin as the edge length of the hexahedrons
(see Figs. \ref{fig:radial-leak}, \ref{fig:vis-current}, \ref{fig:vis-leak}).
Even though a mesh resolution of 1 mm  is strongly recommended for practical
application of the FEM in source analysis \cite{MRull2009,UAydi2014,CHW:Ayd2015},
mesh resolutions of 2 mm are still used even in clinical evaluations \cite{Birot2014}, and
there are areas such as the temporal bone where the skull thickness is actually only 2 mm or even
less \cite[Table 2]{Kwon2006}, so that this is not an artificial scenario. As expected, the
DG-FEM achieved a clearly higher numerical accuracy in the two models
with the
thinnest skull layers, \segresR{2}{2}{82} and \segresR{2}{2}{83}, whereas the results for model \segresR{2}{2}{84} are comparable for the DG- and CG-FEM (see Figure \ref{fig:radial-leak}).
In the latter model, the ratio of resolution (2 mm) and skull thickness (4 mm) guarantees a sufficient resolution and by this already
prohibits leakages.

To make
the difference between the CG- and DG-FEM in the presence of skull leakage better accessible, we generated
Figures \ref{fig:vis-current} and \ref{fig:vis-leak}.
The skull leakage is clearly visible in both figures for model
\segresR{2}{2}{82} and the CG-FEM as described in the results
section.
There is also a slight difference visible in the CSF in all
three models, which might be explained by the
relatively thin CSF layer. At this resolution  (2 mm CSF thickness, 2 mm segmentation
resolution), the elements of the CSF compartment are no longer completely
connected via faces, but often only via shared vertices (as visible in
Figure \ref{fig:vis-current}, left column), which means that for such
a coarse model, the current is blocked in some regions although in the
real geometry it is not. In this case, the CG-FEM shows slightly better
results, as it allows the current to also flow through a single
vertex, which is physically counterintuitive.
In contrast, the DG-FEM does exactly what one would intuitively expect from a mesh based on this segmentation: it channels the main current through the CSF, but due to
the wrong representation of the CSF in the segmentation it yields slightly wrong currents.
It thereby reduces the usually very strong current in the highly conductive CSF compartment, which might explain the slight advantages of the CG-FEM with regard to numerical accuracy for model
\segresR{2}{2}{84} (see especially the lnMAG in Figure \ref{fig:radial-leak}), which is in agreement with the
strong lnMAG effect of modeling the CSF as shown in
\cite[Figure 4]{JVorw2014}. Still, one has to point out that the wrong representation of the CSF geometry has only a very minor effect, as the current is not completely blocked but only slightly diverted.

The findings for the sphere models were underlined by the results obtained using realistic six-compartment head models with mesh resolutions of 1 and 2 mm (Figure \ref{fig:realistic}). Also in this realistic scenario, the DG-FEM showed higher numerical accuracies than the CG-FEM, especially for the lower mesh resolution of 2 mm. The leakages in model \ci{6}{hex}{2mm} are nearly exclusively found in temporal areas, whereas the source positions are regularly distributed over the whole brain. Thus, only a fraction of the sources is strongly affected by leakage effects, and the observed differences between the DG- and CG-FEM in the realistic head model are not as large as one might assume from the results in model \segresR{2}{2}{82}, where the leakages are regularly distributed over the whole model.

Overall, these results
show the benefits of the newly derived DG-FEM approach and motivate the introduction of this new numerical
approach for solving the EEG forward problem.
Furthermore,
the DG-FEM approach allows for an intuitive interpretation of the
results in the presence of segmentation artifacts, which helps in the
interpretation of simulation results, in particular for clinical
experts.

As we have shown in this study, errors in the approximation of the geometry as a result of insufficient image or segmentation resolution and resulting current leakages might
become significant when using hexahedral meshes.
However, there are possibilities to avoid such errors.
In \cite{SVall2010}, a trilinear immersed finite element method to solve the EEG forward
problem was introduced, which allows the use of structured hexahedral meshes, i.e., the
mesh structure is independent of the physical boundaries. The
interfaces are then represented by level-sets and finally considered
using special basis functions. However, this method is still based on
the CG-FEM formulation, so that the behavior
when the thickness of single compartments lies in the range of the resolution of the
underlying mesh is unclear, especially when both the compartment
boundaries between the CSF and skull (inner skull surface) and skull and skin (outer skull surface) are contained
in one element; it is probable that it suffers from the same problems
as the common CG-FEM in such cases. Unfortunately, no further
in-depth analysis for this approach was performed until
now.
Therefore, we claim to have for the first time presented and evaluated an FEM approach preventing
current leakage through single nodes.
In future investigations,
we intend to further develop the already discussed
cut-cell DG approach for source analysis\mbox{\cite{nuessing2016}}, which has the same advantageous features with regard to the representation of the geometry as the approach presented in \cite{SVall2010},
but additionally the charge
preserving property of the DG-FEM as presented here.

The charge preserving property could also be achieved by certain
implementations of finite volume methods. In \cite{MCook2006}, a
vertex-centered finite volume approach was presented that
shares the advantage that anisotropic conductivities can be treated quite naturally with the here-presented DG-FEM approaches.
However, due to its construction, the vertex-centered approach can also be
affected by unphysical current
flow between high-conducting compartments that touch in single nodes
as seen for the CG-FEM. This problem could be avoided using
a cell-centered finite volume approach.

The evaluation of the computational costs of the DG- and CG-FEM showed a higher computational effort for the DG-FEM for the solving of a single equation system and in consequence for the setup of the transfer matrices (Table \ref{tab:computation}). The solving times scaled linearly with the number of degrees of freedom, which corresponds to the theoretically predicted scaling \cite{Braess}. The computation times for the setup of the right-hand side did not differ significantly between the CG- and DG-FEM.

The computation of both the transfer matrix and the right-hand sides can be easily parallelized by simultaneously solving multiple equation systems and setting up multiple right-hand sides, respectively. This simple parallelization approach achieves an optimal scaling with the number of processors, cores and SIMD-lanes. Already a parallel computation of the transfer matrix on four cores, which can be considered as standard equipment nowadays, would reduce $t_\text{transfer}$ to about 8 h for the DG-FEM and model \ci{6}{hex}{1mm}. This reduction of the computation time makes a practical application feasible, since a computation could be carried out overnight. The use of more powerful equipment, as is available in many facilities, would allow for a further speedup. However, in our experiments the overall computation times were dominated by the setup of the right-hand side, which took twice as long as the transfer matrix setup even for the more costly DG-FEM and model \ci{6}{hex}{1mm}. This is a drawback inherent to the subtraction approach. Its nice theoretical properties, which make it preferrable for a first application with new discretization methodology, come at the cost of a dense and expensive-to-compute right-hand side. For the CG-FEM, it was shown that the setup time for the right-hand side vector can be drastically reduced by the adaptation of the direct source modeling approaches, such as Venant, partial integration, or Whitney, that lead to a sparse instead of a dense right-hand-side vector, as previously discussed. For these approaches, the setup time for a single right-hand-side vector is reduced by up to two magnitudes \mbox{\cite{VorwerkDiss2016}}; a similar speedup can be expected for the DG-FEM. Furthermore, just as for the transfer matrix computation, for the computation of the right-hand sides an optimal speedup by parallelization can be easily achieved.

Finally, since the DG approach allows fulfilling the conservation property of electric
charge also in the discrete case, it is not only attractive for source analysis, but
also for the simulation and optimization of brain stimulation methods such as transcranial direct or alternating current
stimulation (tCS, tDCS, tACS)\mbox{\cite{Dmochowski2011,Sadleir2012,Windhoff2013,nuessing2016,CHW:Wag2014}}
or deep brain stimulation \cite{Budson2007,Schmidt2013}.

\section{Conclusion}
We presented theory and numerical evaluation of the subtraction
finite element method (FEM) approach for EEG forward simulations in the discontinuous Galerkin
framework (DG-FEM). We evaluated the accuracy and convergence of the newly
presented approach in spherical and realistic six-compartment models for different mesh resolutions
and compared it to the frequently used Lagrange or continuous Galerkin (CG-)FEM.
In common sphere models, we found similar accuracies of the two approaches for the higher mesh resolutions,
whereas the DG- outperformed the CG-FEM for lower mesh
resolutions. We further compared the approaches in the special
scenario of a very thin skull layer where ``leakages'' might occur. We
found that the DG approach clearly outperforms the CG-FEM in
these scenarios. We underlined these results using visualizations of the electric current flow. The results for the sphere models were confirmed by those obtained in the realistic six-compartment scenario. The computation times presented in this study can easily be reduced through parallelization. Furthermore, different approaches for the setup of the right-hand side are expected to enable a major speedup without loss of accuracy to make a practical application of DG methods in EEG source analysis feasible.
The DG-FEM approach might therefore complement the CG-FEM to
improve source analysis approaches.

\bibliographystyle{siam}
\bibliography{dg_eeg_forward_paper}

\end{document}